\newtheorem{theorem}{Theorem}
\newtheorem{lemma}[theorem]{Lemma}
\theoremstyle{definition}
\newtheorem{definition}{Definition}
\newtheorem{problem}{Problem}
\newtheorem*{fact}{Fact}
\newtheorem{corollary}{Corollary}
\newcommand{\tinyspace}{\mspace{1mu}}
\newcommand{\microspace}{\mspace{0.5mu}}
\newcommand{\snorm}[1]{\lVert\tinyspace#1\tinyspace\rVert}
\newcommand{\ceil}[1]{\left\lceil #1 \right\rceil}
\newcommand{\tr}{\operatorname{Tr}}
\newcommand{\ip}[2]{\left\langle #1 , #2\right\rangle}
\def\({\left(}
\def\){\right)}
\def\I{\mathbb{1}}
\newcommand{\setft}[1]{\mathrm{#1}}
\newcommand{\lin}[1]{\setft{L}\left(#1\right)}
\newcommand{\density}[1]{\setft{D}\left(#1\right)}
\newcommand{\herm}[1]{\setft{Herm}\left(#1\right)}
\newcommand{\SepD}[1]{\setft{SepD}\left(#1\right)}
\newcommand{\ot}{\otimes}
\def\complex{\mathbb{C}}
\def\Rep{\text{Re}}
\def \lket {\left|}
\def \rket {\right\rangle}
\def \lbra {\left\langle}
\def \rbra {\right|}
\newcommand{\ket}[1]{\lket\microspace #1 \microspace\rket}
\newcommand{\bra}[1]{\lbra\microspace #1 \microspace\rbra}
\newcommand{\ketbra}[2]{\lket #1 \rangle \langle #2 \rbra}
\newcommand{\braket}[2]{\lbra #1 | #2 \rket}
\newenvironment{mylist}[1]{\begin{list}{}{
    \setlength{\leftmargin}{#1}
    \setlength{\rightmargin}{0mm}
    \setlength{\labelsep}{2mm}
    \setlength{\labelwidth}{8mm}
    \setlength{\itemsep}{0mm}}}
    {\end{list}}
\newcommand{\class}[1]{\textup{#1}}
\newcommand{\onenorm}[1]{\snorm{#1}_1}
\newcommand{\infnorm}[1]{\snorm{#1}_\infty}
\newcommand{\trnorm}[1]{\snorm{#1}_\text{tr}}
\newcommand{\fbnorm}[1]{\snorm{#1}_\text{F}}
\newcommand{\loccnorm}[1]{\snorm{#1}_\text{LOCC}}
\newcommand{\opnorm}[1]{\snorm{#1}_\text{op}}
\newcommand{\ball}[2]{\mathbf{B}(\complex^{#1},\snorm{\cdot}_{#2})}
\newcommand{\OptSep}[1]{\text{OptSep}(#1)}
\newcommand{\sspace}[1]{\mathrm{SP}(#1)}
\newcommand{\sraw}[1]{\text{Raw-(}#1\text{)}}
\newcommand{\citeequ}[1]{Equ.~(#1)}
\newcommand{\tensorrank}[1]{\text{rank}_\ot(#1)}
\newcommand{\btrank}[2]{\text{brank}_\ot(#1,#2)}
\def\X{\mathcal{X}}
\def\A{\mathcal{A}}
\def\B{\mathcal{B}}
\def\V{\mathcal{V}}
\def\H{\mathcal{H}}
\def\N{\mathcal{N}}
\def\R{\mathcal{R}}
\def\Q{\mathcal{Q}}
\def\nL{\ell}
\begin{document}

\title{\bf Epsilon-net method for optimizations over separable states}

\author{%
  Yaoyun Shi and Xiaodi Wu  \\
  \it \small Department of Electrical Engineering and Computer Science, University of Michigan, Ann Arbor, USA
}

\date{}

\maketitle

\begin{abstract}\noindent
We give algorithms for the optimization problem:
$\max_\rho \ip{Q}{\rho}$, where $Q$ is a Hermitian matrix, and the variable
$\rho$ is a bipartite {\em separable} quantum state. This problem lies at the heart of several problems in quantum
computation and information, such as the complexity of QMA(2).
While the problem is NP-hard, our algorithms are better than brute force
for several instances of interest. In particular, they give PSPACE upper bounds
on promise problems admitting a QMA(2)
protocol in which the verifier performs only logarithmic number of elementary gate
on both proofs, as well as the promise problem of deciding if a bipartite
local Hamiltonian has large or small ground energy.
For $Q\ge0$, our algorithm runs in time exponential in $\|Q\|_F$.
While the existence of such an algorithm was first proved recently by
Brand{\~a}o, Christandl and Yard [{\em Proceedings of the 43rd annual ACM Symposium on Theory of Computation} , 343--352, 2011], our algorithm is conceptually simpler.
\end{abstract}

\section{Introduction} \label{sec:intro}

Entanglement is an essential ingredient in many ingenious applications of
quantum information processing. Understanding and exploiting entanglement
remains a central theme in quantum information processing research~\cite{HHH+09}.
Denote by $\SepD{\A_1\ot \A_2}$ the set of separable (i.e, unentangled) density
operators over the space $\A_1\ot \A_2$. A fundamental question
 known as the \emph{weak membership} problem for separability
is to decide, given a classical description of a quantum state
$\rho$ over $\A_1 \ot \A_2$, whether this state $\rho$ is inside or
$\epsilon$ far away in trace distance from  $\SepD{\A_1 \ot \A_2}$. Unfortunately this very basic
problem turns out to be intractable. In 2003, Gurvits~\cite{Gur03}
proved the NP-hardness of the problem when $\epsilon$ is inverse
exponential in the dimension of $\A_1 \ot \A_2$. The dependence on
$\epsilon$ was later improved to inverse
polynomial~\cite{Ioa07,Gha10}.

In this paper we study a closely related problem to the weak
membership problem discussed above. More precisely, we consider the
linear optimization problem over separable states.
\begin{problem} \label{prob:opt_sep}
Given a Hermitian matrix $Q$ over $\A_1 \ot \A_2$ (of dimension
$d \times d$), compute the optimum value, denoted by $\OptSep{Q}$,
of the optimization problem
\[
      \max \ip{Q}{X}  \text{ subject to }  X \in \SepD{\A_1 \otimes \A_2}.
\]
\end{problem}
\noindent It is a standard fact in convex optimization~\cite{GLS93,
Ioa07} that the weak membership problem and the weak linear
optimization, a special case of Problem~\ref{prob:opt_sep}, over
certain convex set, such as $\SepD{\A_1 \ot \A_2}$, are equivalent
up to polynomial loss in precision and polynomial-time
overhead. Thus the hardness result on the weak membership problem
for separability passes directly to
Problem~\ref{prob:opt_sep}.

Besides the connection with the weak membership problem for
separability, Problem~\ref{prob:opt_sep} can also be understood from
many other aspects. Firstly, as the objective function is the
inner-product of a Hermitian matrix and a quantum state,  which represents the average value of some physical observable, the optimal value
of Problem~\ref{prob:opt_sep} inherently possesses certain physical
meaning.
Secondly, in the study of the tensor product space~\cite{DF92}, the value $\OptSep{Q}$ is precisely
the \emph{injective norm} of $Q$ in $\mathcal{L}(\A_1)\otimes\mathcal{L}(\A_2)$, where $\mathcal{L}(\A)$ denote the
Banach space of operators on $\A$ with the operator norm.
Finally, one may be equally motivated from the study in operations
research. The definition of Problem~\ref{prob:opt_sep} appeared in
an equivalent form in~\cite{LQNY} with the new name of
``Bi-Quadratic Optimization over Unit Spheres''. Subsequent
works~\cite{HLZ10,So11} demonstrate that Problem~\ref{prob:opt_sep}
is just a special case of a more general class of optimization
problems called homogenous polynomial optimization with quadratic
constraints, which is currently an active research topic in that
field.

Another motivation to study Problem~\ref{prob:opt_sep} is
the recent interest about the complexity class
\class{QMA(2)}. Originally the class \class{QMA} (also known as
\emph{quantum proofs}) was defined~\cite{KitaevW02} as the quantum
counterpart of the classical complexity class \class{NP}. While
the extension of \class{NP} to allow multiple provers trivially reduces to \class{NP} itself,
the power of $\class{QMA(2)}$, the extension for \class{QMA} with multiple {\em unentangled}
provers, remains far from being well understood.
The study of the multiple-prover model was initiated
in~\cite{KMY01,KMY03}, where \class{QMA(k)} denotes the complexity
class for the $k$-prover case. Much attention was
attracted to this model because of the discovery that NP
admits \emph{logarithmic}-size unentangled quantum
proofs~\cite{BT09}. This result was surprising because
single prover quantum logarithm-size proofs only characterize
\class{BQP}~\cite{MW05}. It seems adding one unentangled prover
increases the power of the model substantially. There are several
subsequent works on refining the initial protocol either with
improved completeness and soundness bounds~\cite{Bei10,ABD+09, CF11, GNN11} or
with less powerful verifiers~\cite{CD10}. Recently it was proved that
\class{QMA(2)}=\class{QMA(poly)} \cite{HM10} by using the so-called
\emph{product test} protocol that determines whether a multipartite
state is a product state when two copies of it are given.
There
is another line of research on the power of unentangled quantum
proofs with restricted verifiers. Two complexity classes
\class{BellQMA} and \class{LOCCQMA}, referring to the restricted
verifiers that perform only nonadaptive or adaptive local
measurements respectively, were defined in~\cite{ABD+09} and
studied in~\cite{Bra08,BCY11}. It has been shown~\cite{BCY11} that
\class{LOCCQMA(m)} is equal to \class{QMA} for constant $m$.

Despite much effort, no nontrivial upper bound of \class{QMA(2)}
is known. The best known upper bound \class{QMA(2)}$\subseteq$\class{NEXP}
follows trivially by nondeterministically guessing the two proofs. It
would be surprising if $\class{QMA(2)}=\class{NEXP}$.
Thus it is reasonable to seek a better upper bound like
\class{EXP} or even \class{PSPACE}. It is not hard to see that simulating
QMA(2) amounts to distinguishing between two promises of $\OptSep{Q}$,
although one has the freedom to choose the appropriate $Q$. Note
that Problem~\ref{prob:opt_sep} was also studied in~\cite{BCY11} for
the same purpose.

\vspace{3mm} \noindent \textbf{Hardness result.}
There are several approaches to prove the hardness of
Problem~\ref{prob:opt_sep}. The first is to make use of the
NP-hardness of the weak membership problem and the folk theorem in
convex optimization as mentioned above. However, one may directly
reduce the \textrm{CLIQUE} problem to
Problem~\ref{prob:opt_sep}~\cite{deK08,LQNY}. There is also a
stronger hardness result~\cite{HM10} on the exact running time
of algorithms solving Problem~\ref{prob:opt_sep} conditioned on the
Exponential Time Hypothesis (ETH)~\cite{IP01}. The hardness results
extend naturally to the approximation version of
Problem~\ref{prob:opt_sep}. It is known that $\OptSep{Q}$ remains to
be NP-hard to compute even if inverse polynomial additive error is
allowed. Nevertheless, it is wide open whether the hardness result
remains if one allows even larger additive error.

From the perspective of operations research, the hardness of Problem 1 is  a consequence of not being a convex optimization problem.
In this case although efficient methods, compared with brute-force, for finding a local optimum
usually exist, on the other hand finding the global one is fraught with difficulty.
This is because one needs to enumerate all possible local optima
before one can determine the global optimum in the worst case.

\vspace{3mm} \noindent \textbf{Our contributions.} In this paper we
provide efficient algorithms for Problem~\ref{prob:opt_sep}
in either time or space for several $Q$s of interest. As the hardness result implies that enumeration is likely to
be inevitable in the worst case, our idea is to enumerate via
epsilon-nets more "cleverly" with the help of certain structure of
$Q$.

When the total number of points to enumerate is not large, one can
represent and hence enumerate each point in polynomial space. If the
additional computation for each point can also be done in polynomial
space, one immediately gets a polynomial-space implementation for
the whole algorithm by composing those two components naturally. We
make use of the relation \class{NC(poly)}=\class{PSPACE}
\cite{Borodin77} to obtain space-efficient implementation for the
additional computation, which in our cases basically includes the
following two parts. The first part helps to make sure the
enumeration procedure works correctly. This is because these
epsilon-nets of interest in our algorithm are not standard,
additional effort is necessary to generate them. This part turns
into a simple application of the so-called \emph{multiplicative
matrix weight update} (MMW)
method~\cite{AroraHK05,WarmuthK06,Kale07} to computing a min-max
form, which is known to admit efficient parallel algorithms under
certain conditions. The second part contains the real computation
which ,in our case, only consists of fundamental matrix operations.
It is well known those operations usually admit efficient parallel
algorithms~\cite{vzGathen93}. As a result, both parts of the
additional computation admit efficient parallel algorithms, and
therefore, the additional computation can be implemented in
polynomial space in our case.

We summarize below the main results obtained by
applying the above ideas.

\vspace{2mm} \noindent \textbf{1}. The first property exploited is
the so-called \emph{decomposability} of $Q$ which refers to whether
$Q$ can be decomposed in the form $Q=\sum_{i=1}^M Q^1_i \ot Q^2_i$
with small $M$. Note this concept is closely related to a more
commonly studied concept, \emph{tensor rank}. Intuitively, if one
substitutes this $Q$'s decomposition into $\ip{Q}{ \rho_1 \ot
\rho_2}$ and treat
$\ip{Q^1_1}{\rho_1},\cdots,\ip{Q^1_M}{\rho_1},\ip{Q^2_1}{\rho_2},\cdots,
\ip{Q^2_M}{\rho_2}$ as variables, the optimization problem becomes
quadratic and $M$ corresponds to the number of second-order
terms in the objective function. If we plug the values of $\ip{Q^1_1}{\rho_1},\cdots,\ip{Q^1_M}{\rho_1}$ into the objective function, then the optimization problem reduces to be a semidefinite program, and thus can be efficiently solved. Hence by enumerating all possible values of $\ip{Q^1_1}{\rho_1},\cdots,\ip{Q^1_M}{\rho_1}$ one can efficiently solve the original problem when $M$ is small.
Since this approach naturally extends to the
$k$-partite case for $k\geq 2$, we obtain the following general result.

\begin{theorem}[Informal. See Section~\ref{sec:main}]
Given any Hermitian $Q$ and its decomposition $Q=\sum_{i=1}^M Q^1_i
\ot \cdots \ot Q^k_i$ as input, the quantity $\OptSep{Q}$ can be
approximated with additive error $\delta$ in quasi-polynomial
time\footnote{Quasi-polynomial time is upper bounded by $2^{O((\log
n)^c)}$ for some fixed c, where $n$ is the input size.} in $d$ and
$1/\delta$ if $kM$ is bounded by poly-logarithms of $d$.
\end{theorem}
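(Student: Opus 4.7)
The plan is to exploit the explicit decomposition $Q = \sum_{i=1}^M Q^1_i \otimes \cdots \otimes Q^k_i$ to reduce the optimization to an enumeration of only $M(k-1)$ real parameters. By convexity of $\SepD{\A_1 \otimes \cdots \otimes \A_k}$ and linearity of the objective, the optimum is attained on a pure product state $\rho_1\otimes\cdots\otimes\rho_k$, and the objective decouples as
$$\ip{Q}{\rho_1 \otimes \cdots \otimes \rho_k} \;=\; \sum_{i=1}^M \prod_{j=1}^k a^j_i, \qquad a^j_i \defeq \ip{Q^j_i}{\rho_j}.$$
Once $\rho_1,\ldots,\rho_{k-1}$ (equivalently, the vectors $\vec a^{\,1},\ldots,\vec a^{\,k-1}\in\real^M$) are fixed, the remaining maximization over $\rho_k \in \density{\A_k}$ is linear and reduces to computing the largest eigenvalue of the Hermitian matrix $\sum_i\bigl(\prod_{j<k} a^j_i\bigr)\, Q^k_i$, a standard matrix operation.

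The core step is to enumerate $(\vec a^{\,1},\ldots,\vec a^{\,k-1})$ over a regular grid inside the bounding box $\prod_{j,i}[-\snorm{Q^j_i}_\infty,\snorm{Q^j_i}_\infty]$, rather than enumerating the high-dimensional density matrices themselves. Since the polynomial $\sum_i \prod_j a^j_i$ is Lipschitz in the coordinates with constant controlled by $k$ and the operator norms $\snorm{Q^j_i}_\infty$, choosing the grid spacing $\delta'=\delta/\text{poly}(d,k,\max_{j,i}\snorm{Q^j_i}_\infty)$ guarantees additive error at most $\delta$ in the final answer. The resulting grid has $(R/\delta')^{M(k-1)}$ points for a polynomial bound $R$, which is quasi-polynomial in $d$ and $1/\delta$ under the hypothesis $kM = O(\text{polylog}(d))$.

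A crucial subtlety is that not every grid point is feasible, i.e., actually realizable as $(\ip{Q^j_i}{\rho_j})_i$ for some density operator; evaluating the objective at infeasible points could cause the reported value to exceed $\OptSep{Q}$. For each candidate $\vec a^{\,j}$ we therefore solve the semidefinite feasibility problem of finding $\rho_j\in\density{\A_j}$ with $\max_i \abs{\ip{Q^j_i}{\rho_j}-a^j_i}\le O(\delta')$, which can be handled by the multiplicative matrix weight (MMW) method; this is also the route that later admits the parallel implementation needed for the space-efficient variants mentioned in the introduction. The algorithm returns the maximum value over all feasible grid combinations.

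The main obstacle I expect is careful management of error propagation: the Lipschitz constant of a product of $k$ terms, each bounded by $R$, scales as $kR^{k-1}\max_i\snorm{Q^k_i}_\infty$, so $\delta'$ must absorb this blow-up without inflating $(R/\delta')^{M(k-1)}$ beyond quasi-polynomial. Since $kM = O(\text{polylog}(d))$, one has a polylogarithmic budget in the exponent and needs only $\log(R/\delta') = O(\log d + \log(1/\delta))$, which the polynomial bounds on $R$ and $1/\delta'$ comfortably satisfy, provided $\max_{j,i}\snorm{Q^j_i}_\infty$ is polynomially bounded in the input size.
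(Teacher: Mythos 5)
Your proposal follows essentially the same route as the paper: enumerate the $(k-1)M$ values $\ip{Q^j_i}{\rho_j}$ over a grid, filter out infeasible grid points by an MMW-based distance computation to the realizable set, reduce the last party to a maximum-eigenvalue computation, and control error by a Lipschitz/hybrid bound on the product. The only detail worth noting is that since the individual factors $Q^j_i$ need not be Hermitian, the values $\ip{Q^j_i}{\rho_j}$ are complex rather than real, so the grid must cover complex disks and the final matrix $\sum_i\bigl(\prod_{j<k}q^j_i\bigr)Q^k_i$ must be symmetrized before taking its largest eigenvalue, exactly as the paper does.
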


By exploiting the space-efficient algorithm design strategy above,
this algorithm can also be made space-efficient. To facilitate the
later applications to complexity classes, we choose the input size
to be some $n$ such that $d=\exp(\text{poly}(n))$.

\begin{corollary}[Informal. See Section~\ref{sec:main}]
If $kM/\delta \in O(\text{poly}(n))$, the quantity $\OptSep{Q}$ can
be approximated with additive error $\delta$ in \class{PSPACE}.
\end{corollary}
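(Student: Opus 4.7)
The plan is to recycle the enumeration-and-SDP algorithm underlying the preceding Theorem and reorganize it so that, under the stronger hypothesis $kM/\delta=O(\text{poly}(n))$, the whole procedure fits in polynomial space rather than quasi-polynomial time. By convexity $\OptSep{Q}$ is attained on a product state $\rho_1\otimes\cdots\otimes\rho_k$, and under the decomposition of $Q$ the objective becomes $\sum_{i=1}^M\prod_{j=1}^k\sip{Q_i^j}{\rho_j}$. Once the $(k-1)M$ scalars $a_{j,i}=\sip{Q_i^j}{\rho_j}$ (for $j<k$) are fixed, the residual maximization over $\rho_k$ is just the eigenvalue problem $\lambda_{\max}\bigl(\sum_i(\prod_{j<k}a_{j,i})Q_i^k\bigr)$.

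First I would set up the outer enumeration. Each $a_{j,i}$ lies in a bounded interval and is discretized to precision $\Theta(\delta)$, so it takes $O(\log(1/\delta))=O(\log n)$ bits. A full tuple $a$ therefore occupies $O(kM\log(1/\delta))=O(\text{poly}(n))$ bits under the hypothesis, which fits comfortably in a polynomial-space counter. The algorithm sweeps through the tuples sequentially, keeping only the current tuple, an index, and a running maximum, each of size $O(\text{poly}(n))$ bits; the total number of tuples may be exponential, but this is harmless since only one is in memory at any time.

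Second, for each tuple $a$ I would execute two per-point tasks: (i) confirm that some marginals $\rho_1,\ldots,\rho_{k-1}$ approximately realize the claimed values (so the tuple corresponds to a bona fide separable state), and (ii) evaluate the residual eigenvalue. Task (ii) is a standard $\lambda_{\max}$ computation on an explicit Hermitian matrix and lies in \class{NC}. Task (i)---generating the non-standard epsilon-net of achievable value-tuples alluded to in the contributions section---is handled by casting feasibility as a matrix min-max problem and applying $\text{poly}(n)$ rounds of multiplicative matrix weight updates; each round is dominated by a matrix exponential and a few inner products, both of which admit \class{NC} implementations. Thus the per-point work is in \class{NC} with polynomial depth and size, i.e., in \class{NC(poly)}.

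Invoking $\class{NC(poly)}=\class{PSPACE}$ \cite{Borodin77}, each per-point computation can be simulated in polynomial space, and composing with the outer polynomial-space enumeration yields the claimed \class{PSPACE} bound. The main obstacle is the rigorous verification of step (i) in \class{NC}: one must show both that MMW converges in a polynomial number of rounds for this specific marginal-feasibility min-max, and that all matrix operations---in particular the matrix exponential on $d\times d$ Hermitian matrices with $\log d=\text{poly}(n)$---admit the required \class{NC} circuits at that scale. Once these technical pieces are in place, the composition argument above completes the proof.
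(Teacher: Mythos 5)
Your proposal follows essentially the same route as the paper: enumerate the raw $\epsilon$-net points one at a time in polynomial space, perform the per-point feasibility check (the distance to the $\Q$-space) via the multiplicative matrix weight min-max computation and the residual $\lambda_{\max}$ evaluation in \class{NC(poly)}, and compose using $\class{NC(poly)}=\class{PSPACE}$. The technical pieces you flag as remaining obstacles are exactly what the paper supplies in Lemma~\ref{lm:closeness_check} and its appendix, so the argument is correct and matches the paper's own proof.
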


As a direct application, we prove the following variant of
\class{QMA(2)} belongs to \class{PSPACE} where
$\class{QMA(2)}[\text{poly}(n), O(\log(n))]$ refers to the model
where the verifier only performs $O(\log(n))$ elementary gates that act on
both proofs at the same time and a polynomial number of other elementary
gates. Note \class{QMA(2)}[poly(n),poly(n)]=\class{QMA(2)} in
our notation.
\begin{corollary}
$\class{QMA(2)}[\text{poly}(n), O(\log(n))]$ $\subseteq
\class{PSPACE}$.
\end{corollary}
This result establishes the first \class{PSPACE} upper bound for a
variant of \class{QMA(2)} where the verifier is allowed to generate
some quantum entanglement between two proofs. In contrast, previous
results are all about variants with nonadaptive or adaptive local
measurements, such as  \class{BellQMA(2)}~\cite{ABD+09,Bra08,CD10}
or \class{LOCCQMA(2)}~\cite{ABD+09, BCY11}.

We also study Problem~\ref{prob:opt_sep} when $Q$ is a local
Hamiltonian over $k$ parties. Recall that a promise version of this
problem in the one party case, namely the \emph{local-Hamiltonian
problem}, is \class{QMA}-complete problem~\cite{KitaevW02}. Our
definition extends the original local Hamiltonian problem to its
$k$-partite version. However, as will be clear in the main section,
the $k$-partite local Hamiltonian problem is no longer necessarily
\class{QMA(k)}-complete. On the other side, our enumeration
algorithm based on the decomposability of $Q$ works extremely well
in this case. As a result, we obtain the following corollary.

\begin{corollary}[Informal. See Section~\ref{sec:local_hamiltonian}]
Given some local Hamiltonian $Q$ over $k$ parties as input,
$\OptSep{Q}$ can be approximated with additive error $\delta$ in
quasi-polynomial time in $d,1/\delta$; the $k$-partite local
Hamiltonian problem belongs to \class{PSPACE}.
\end{corollary}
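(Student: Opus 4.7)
The plan is to reduce this corollary to the previous theorem and corollary by showing that a $k$-partite local Hamiltonian always admits an efficient explicit tensor decomposition $Q=\sum_{i=1}^M Q^1_i\otimes\cdots\otimes Q^k_i$ with $M=\text{poly}(n)$, after which the time and space bounds follow by substitution of parameters.

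To build the decomposition, write $Q=\sum_{j=1}^T H_j$, where each local term $H_j$ acts non-trivially on only $O(1)$ qubits (spread across the $k$ parties) and $T=\text{poly}(n)$. Expand each $H_j$ in the tensor-product Pauli basis over the qubits it touches; because $H_j$ is $O(1)$-local, this expansion has at most $4^{O(1)}=O(1)$ non-zero terms, and each such term is by construction a product of Pauli strings, one per party, where most factors are identities (on the qubits within a party untouched by $H_j$). Collecting all contributions over $j$ yields a cross-party decomposition of $Q$ with $M=O(T)=\text{poly}(n)$ product terms, each factor having operator norm at most $1$; all of this is computable directly from the description of $Q$.

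With the decomposition in hand, I would apply the previous theorem and corollary. For input size $n$ we have $d=\exp(\text{poly}(n))$, hence $\log d=\text{poly}(n)$ and $kM\le n\cdot\text{poly}(n)=\text{poly}(\log d)$, which meets the hypothesis of the theorem and delivers a quasi-polynomial-time approximation of $\OptSep{Q}$ to additive error $\delta$ in $d$ and $1/\delta$. For the \class{PSPACE} claim, the standard promise version of the local Hamiltonian problem has an inverse-polynomial yes/no gap, so it suffices to take $\delta\in\Omega(1/\text{poly}(n))$; then $kM/\delta=\text{poly}(n)$, and the corollary places the decision problem in \class{PSPACE} (applying it to $-Q$ to convert between the ground-state and maximum-energy formulations).

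The only substantive step is the first one: one must verify that $O(1)$-locality of each term truly yields an $O(1)$-sized product expansion across $k$ parties (not, say, exponential in $k$), which is immediate because a Pauli string of length $O(1)$ touches at most $O(1)$ parties while the remaining party-factors are identities, so the number of Pauli strings needed is $4^{O(1)}$ regardless of $k$. Once this is observed, correctness, running time, and space usage all propagate from the previous two statements without further work.
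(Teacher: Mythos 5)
Your proposal is correct and follows essentially the same route as the paper: the paper's Lemma~\ref{lm:local_ham_decomposable} likewise observes that each $l$-local term is $(4^l,w)$-decomposable across the $k$ parties (your Pauli-basis expansion padded with identities is the natural instantiation of this), giving an $(O((4nk)^l),w)$-decomposition of $Q$, after which Theorem~\ref{thm:main_alg} and Corollary~\ref{cor:nc_main} are invoked with the same parameter substitutions ($M=\text{poly}(n)$, $k,l=O(1)$, inverse-polynomial $\delta$ for the promise problem). The only cosmetic difference is that you bound the number of product terms by the number of local terms $T$ rather than by $\binom{kn}{l}$, which changes nothing in the conclusion.
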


Very recently, an independent result~\cite{CS11} of us shows that
the 2-partite local Hamiltonian problem defined above lies in
\class{QMA}, and henceforth in \class{PSPACE}, which complements our
algorithmic result.

\vspace{2mm} \noindent \textbf{2.} The second structure made use of
is the eigenspace of $Q$ of large eigenvalues. As a result, we
establish an algorithm solving Problem~\ref{prob:opt_sep} with
running time exponential in $\fbnorm{Q}$.

\begin{theorem}[Informal. See Section~\ref{sec:alg_Q2}]
For positive semidefinite $Q$,  the quantity $\OptSep{Q}$ can be
approximated with additive error $\delta$ in time
$\exp(O(\log(d)+\delta^{-2} \fbnorm{Q}^2 \ln (\fbnorm{Q}/\delta)))$.
\end{theorem}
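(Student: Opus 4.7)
The plan is to truncate $Q$ to its large-eigenvalue part of small rank $r=O(\fbnorm{Q}^2/\delta^2)$, reformulate the optimization of $\OptSep{Q_L}$ as an outer search over a unit vector $\beta$ in $\complex^r$ whose inner problem is solved by a single SVD, and then enumerate the outer search via an $\epsilon$-net on the sphere. Diagonalize $Q=\sum_i\lambda_i\ketbra{v_i}{v_i}$ with $\lambda_1\ge\lambda_2\ge\cdots\ge 0$. Because $\sum_i\lambda_i^2=\fbnorm{Q}^2$, at most $r:=\lceil 4\fbnorm{Q}^2/\delta^2\rceil$ eigenvalues exceed $\tau:=\delta/2$. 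Set $Q_L:=\sum_{i\le r}\lambda_i\ketbra{v_i}{v_i}$ and $Q_S:=Q-Q_L$; both are PSD with $\opnorm{Q_S}\le\tau$, so $0\le\ip{Q_S}{\rho}\le\tau$ for every state $\rho$, whence $\OptSep{Q_L}\le\OptSep{Q}\le\OptSep{Q_L}+\delta/2$. Thus it suffices to approximate $\OptSep{Q_L}$ within $\delta/2$.

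For any unit product state $\ket{\phi_1\phi_2}$, write $\bra{\phi_1\phi_2}Q_L\ket{\phi_1\phi_2}=\enorm{\sqrt{Q_L}\ket{\phi_1\phi_2}}^2=\max_{\enorm{u}=1}|\bra{u}\sqrt{Q_L}\ket{\phi_1\phi_2}|^2$. Since the image of $\sqrt{Q_L}$ is spanned by $\ket{v_1},\dots,\ket{v_r}$, the optimal $\ket{u}$ is of the form $\ket{u}=\sum_i\beta_i\ket{v_i}$ with $\beta$ on the unit sphere of $\complex^r$. Setting $\ket{f_\beta}:=\sum_i\beta_i\sqrt{\lambda_i}\ket{v_i}\in\A_1\otimes\A_2$ gives $\bra{u}\sqrt{Q_L}\ket{\phi_1\phi_2}=\braket{f_\beta}{\phi_1\phi_2}$, hence
\[
\OptSep{Q_L}\;=\;\max_{\enorm{\beta}=1}\;\max_{\ket{\phi_1\phi_2}}\;\bigl|\braket{f_\beta}{\phi_1\phi_2}\bigr|^2\;=\;\max_{\enorm{\beta}=1}\;\opnorm{F_\beta}^2,
\]
where $F_\beta$ is the matrix reshape of $\ket{f_\beta}$; the inner maximum is its squared top singular value, computable in $\operatorname{poly}(d)$ time by SVD.

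Now build an $\epsilon$-net $N$ of cardinality at most $(3/\epsilon)^{2r}$ on the unit sphere of $\complex^r$ and output $\max_{\beta\in N}\opnorm{F_\beta}^2$. Orthonormality of $\{\ket{v_i}\}$ gives $\enorm{\ket{f_\beta}-\ket{f_{\beta'}}}^2=\sum_i\lambda_i|\beta_i-\beta'_i|^2\le\lambda_1\enorm{\beta-\beta'}^2$, and $1$-Lipschitzness of the top singular value with respect to the Frobenius norm of the matrix, together with $\opnorm{F_\beta}\le\sqrt{\lambda_1}\le\sqrt{\fbnorm{Q}}$, yields
\[
\bigl|\opnorm{F_\beta}^2-\opnorm{F_{\beta'}}^2\bigr|\;\le\;2\lambda_1\enorm{\beta-\beta'}.
\]
Choosing $\epsilon:=\delta/(4\lambda_1)\ge\delta/(4\fbnorm{Q})$ bounds the net-induced error by $\delta/2$, so the total additive error is $\le\delta$. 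The running time is $|N|\cdot\operatorname{poly}(d)=\exp(O(r\log(\fbnorm{Q}/\delta)))\cdot\operatorname{poly}(d)=\exp(O(\log d+\delta^{-2}\fbnorm{Q}^2\log(\fbnorm{Q}/\delta)))$, matching the claim.

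The conceptual crux is the dualization step: introducing the auxiliary unit vector $\ket{u}$ and recognizing that it must lie in the $r$-dimensional image of $\sqrt{Q_L}$ is what converts the hard optimization over the $(d_1+d_2)$-dimensional product-state manifold into a one-parameter family indexed by an $r$-sphere, each instance of which is a bipartite injective norm solvable by a single SVD. Beyond that, the only remaining work is choosing the truncation threshold and the net mesh to balance the two sources of error against the Frobenius-norm dependence in the running time, which is routine Lipschitz accounting plus a standard volumetric $\epsilon$-net bound.
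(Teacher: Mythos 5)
Your proposal is correct and follows essentially the same route as the paper's proof of Theorem~\ref{thm:LOCC_alg}: truncate $Q$ to the eigenspace of eigenvalues at least $\delta/2$ (of dimension $O(\fbnorm{Q}^2/\delta^2)$), use duality of the Euclidean norm to introduce an auxiliary unit vector ranging over that low-dimensional space, swap the two maximizations, solve the inner product-state optimization as the top Schmidt coefficient of $\ket{f_\beta}$ via an SVD, and enumerate the outer variable over a net of mesh $\delta/(4\fbnorm{Q})$ with the same Lipschitz error accounting. The only differences are cosmetic (a net on the sphere rather than the unit ball, and phrasing the dualization through $\sqrt{Q_L}$ instead of the vector $\gamma^{u,v}$).
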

A similar running time $\exp(O(\log^2(d)\delta^{-2}\fbnorm{Q}^2 ))$
was obtained in~\cite{BCY11} using some known results in quantum
information theory.(i.e., the semidefinite programming for finding
symmetric extension~\cite{DPS04} and an improved quantum de
Finetti-type bound.) In contrast, our algorithm only uses
fundamental operations of matrices and epsilon-nets. To approximate
with precision $\delta$, it suffices to consider the eigenspace of
$Q$ of eigenvalues greater than $\delta$ whose dimension  is bounded
by $\fbnorm{Q}^2/\delta^2$. Nevertheless, naively enumerating
density operators over that subspace does not work since one cannot
detect the separability of those density operators. We circumvent
this difficulty by making nontrivial use of the Schmidt
decomposition of bipartite pure states.

We note, however, that other results in~\cite{BCY11} do not follow
from our algorithm, and our method cannot be seen as a replacement
of the kernel technique therein. Furthermore, our method does not
extend to the $k$-partite case, as there is no Schmidt decomposition
in that case.

\vspace{3mm} \noindent \textbf{Open problems.} The main open problem
is whether Problem~\ref{prob:opt_sep} admits an efficient algorithm
in either time or space, when larger additive error is allowed. It is
also interesting to see whether, for those $Q$s that come from the
simulation of the complexity class \class{QMA(2)}, the quantity
$\OptSep{Q}$ can be efficiently computed.

\vspace{3mm} \noindent \textbf{Organizations:} The rest part of this
paper is organized as follows. The necessary background knowledge on
the particular epsilon-nets in use is introduced in
Section~\ref{sec:epsilon_net}. The main algorithm based on the
decomposability of $Q$ is illustrated in Section~\ref{sec:main}. Two applications of such an algorithm is discussed immediately after;
 the simulation of variants of \class{QMA(2)} is discussed in
Section~\ref{sec:QMA_variants} and the local Hamiltonian case is
discussed in Section~\ref{sec:local_hamiltonian}. Finally, the
demonstration of an algorithm with running time exponential in
$\fbnorm{Q}$ for Problem~\ref{prob:opt_sep} can be found in
Section~\ref{sec:alg_Q2}.

\vspace{3mm} \noindent \textbf{Notations:} We assume familiarity
with standard concepts from quantum
information~\cite{NielsenC00,KitaevW02,Watrous08-lec}. Particularly,
our notations follow from~\cite{Watrous08-lec}. Precisely, we use
$\A,\B$ to denote complex Euclidean spaces and
$\lin{\A},\herm{\A},\density{\A}$ to stand for the linear operators,
Hermitian operators and density operators over $\A$
respectively. We denote the trace norm of operator $Q$ by
$\trnorm{Q}$, i.e. $\trnorm{Q}=\tr{(Q^* Q)^{1/2}}$ where $Q^*$
stands for the conjugate transpose of $Q$. The Frobenius norm is
denoted by $\fbnorm{Q}$ and the operator norm is denoted by
$\opnorm{Q}$. The $\nL_1$ norm of vector $x \in \complex^n$ is
denoted by $\onenorm{x}=\sum_{i=1}^n |x_i|$ and its $\nL_\infty$
norm is denoted by $\infnorm{x}=\max_{i=1,\cdots,n} |x_i|$. We use
$\snorm{x}$ to denote the Euclidean norm.  The unit ball of
$\complex^n$ under certain norm $\snorm{\cdot}$ is denoted by
$\ball{n}{}$.

\section{Epsilon Net} \label{sec:epsilon_net}
The epsilon-net (or $\epsilon$-net) is an important concept  in several
mathematical topics.
For our purpose, we adopt the following definition of $\epsilon$-net.

\begin{definition}[$\epsilon$-net]
\label{def:e_nets} Let $(X,d)$ \footnote{ We will abuse the notation
later where the metric $d$ is replaced by the norm from which the
metric is induced.} be any metric space and let $\epsilon>0$. A
subset $\N_\epsilon$ is called an $\epsilon$-net of $X$ if
for each $x\in X$, there exists $y \in \N_\epsilon$
with $d(x, y)\leq \epsilon$.
\end{definition}


Now we turn to the particular $\epsilon$-net considered in this
paper. Let $\H$ be any Hilbert space of dimension $d$ and
$\mathcal{Q}=\mathcal{Q}(M,w)=(Q_1,Q_2,\cdots Q_M)$ be a sequence
of operators on $\H$ with $\opnorm{Q_i}\le w$, for all $i$.
Define the $\mathcal{Q}$-space, denoted by $\sspace{\Q}$,  as
\[
 \sspace{\Q} = \{ ( \ip{Q_1}{\rho}, \ip{Q_2}{\rho},
 \cdots, \ip{Q_M}{\rho}): \rho \in \density{\H} \} \subseteq \mathbb{C}^M.
\]
The set is convex and compact, and a (possibly proper) subset of
$\sraw{M,w}=\{(q_1,q_2,\cdots, q_M): \forall i, q_i \in \mathbb{C},
\snorm{q_i}\leq w\}$.

In the following, we construct an $\epsilon$-net of the metric space $(\sspace{\Q}, \nL_1)$. Our
method will first generate an $\epsilon$-net of $(\sraw{M,w},
\nL_1)$ via a standard procedure and then
select those points that are also close to $\Q$-space.
We will present and analyze the efficiency of the selection process first
and come back to the construction of the $\epsilon$-net afterwards.

\subsection*{Selection process}
The selection process determines if some point $\vec{p}$ in
$\sraw{M,w}$ is close to  $\sspace{\Q}$. 
Denote by $\text{dis}(\vec{p})$ the distance of
$\vec{p}\in\mathbb{C}^M$ to $\sspace{\Q}$, i.e.,
\[
  \text{dis}(\vec{p}) = \min_{ \vec{q} \in \sspace{\Q}}
  \snorm{\vec{p}-\vec{q}}_1.
\]
We show in this section how to compute $\text{dis}(\vec{p})$
efficiently in space. That the problem admits a polynomial time
algorithm follows from the fact that it can be cast as a
semidefinite programming problem. However, to the authors'
knowledge, only a few restricted classes of SDPs also admit
space-efficient algorithms and none of them applies to our case.
Thus we need to develop our own space-efficient algorithm for this
problem.

By making use of the definition of $\sspace{\Q}$ and the duality
of the $\nL_1$ norm, one can find the following equivalent
definition of the distance.
\begin{equation*} 
\text{dis}(\vec{p})= \min_{\rho \in \density{\H}} \max_{\vec{z} \in
\ball{M}{\infty}} \Rep \ip{\vec{p}-\vec{q}(\rho)}{\vec{z}},
\end{equation*}
where
\begin{equation} \label{eqn:def_vec_q}
 \vec{q}(\rho)=(\ip{Q_1}{ \rho}, \ip{Q_2}{ \rho}, \cdots, \ip{Q_M}{\rho})
 \in \complex^M.
\end{equation}
By rephrasing $\text{dis}(\vec{p})$ in the above form, one shows
the quantity $\text{dis}(\vec{p})$ is actually an equilibrium value. This follows
from the well-known extensions of von' Neumann's Min-Max
Theorem~\cite{vN28,Fan53}. One can easily verify that the density
operator set $\density{\H}$ and the unit ball of $\complex^M$ under
$\nL_\infty$ norm are convex and compact sets. Moreover, the
objective function is a bilinear form over the two sets. The Min-Max
theorem implies
\begin{equation} \label{eqn:closeness_equilibrium}
\min_{\rho \in \density{\H}} \max_{\vec{z} \in \ball{M}{\infty} }
\Rep \ip{\vec{p}-\vec{q}(\rho)}{\vec{z}} = \max_{\vec{z} \in
\ball{M}{\infty}} \min_{\rho \in \density{\H}} \Rep
\ip{\vec{p}-\vec{q}(\rho)}{\vec{z}}.
\end{equation}

Fortunately, there is an efficient algorithm in either time or space
(in terms of $d,M,w,1/\epsilon$) to approximate
$\text{dis}(\vec{p})$ with additive error $\epsilon$. The main tool
used here is the so-called matrix multiplicative weight update
method~\cite{AroraHK05,Kale07,WarmuthK06}. Similar min-max forms
also appeared before in a series of work on quantum
complexity~\cite{JainW09,Wu10a,Wu10b,GutoskiW10}. The algorithm
presented here is another simple application of this powerful
method. For the sake of completeness, we provide the proof of the
following lemma in Appendix~\ref{app:proof_lm_close}.

\begin{lemma} \label{lm:closeness_check}
Given any point $\vec{p} \in \sraw{M,w}$ and $\epsilon>0$, there is
an algorithm (depicted in Appendix~\ref{app:proof_lm_close}) that
approximates $\text{dis}(\vec{p})$ with additive error $\epsilon$.
Namely, the return value $\tilde{d}$ of this algorithm satisfies
\[
   \tilde{d}-\epsilon \leq \text{dis}(\vec{p}) \leq
   \tilde{d}+\epsilon.
\]
Moreover, the algorithm runs in $\text{poly}(d,M,w,1/\epsilon)$
time. Furthermore, if $d$ is considered as the input size and
$M,w,1/\epsilon \in O( \text{poly-log}(d))$, this algorithm is also
efficient in parallel, namely, it is inside \class{NC}.
\end{lemma}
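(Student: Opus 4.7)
The plan is to apply the matrix multiplicative weight update (MMW) method to the min-max formulation in~\eqref{eqn:closeness_equilibrium}. Think of this as a two-player zero-sum game: the minimizing player chooses $\rho\in\density{\H}$, the maximizing player chooses $\vec{z}\in\ball{M}{\infty}$, and the payoff $\Rep\ip{\vec{p}-\vec{q}(\rho)}{\vec{z}}$ is bilinear, so by the Min-Max theorem its value coincides with $\text{dis}(\vec{p})$. I would let the min player use MMW and the max player best-respond, then output the time-averaged payoff.

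First, observe that the max player's best response to any $\rho$ has a trivial closed form: writing $w_i = p_i-\ip{Q_i}{\rho}$, the unique maximizer over $|z_i|\le 1$ is $z_i = w_i/|w_i|$ (set to $0$ if $w_i=0$), which yields payoff $\onenorm{\vec{p}-\vec{q}(\rho)}$. With this $\vec{z}^{(t)}$ fixed, the payoff is an affine function of $\rho$, and minimizing it is equivalent to maximizing $\Rep\ip{A^{(t)}}{\rho}$ where $A^{(t)} = \sum_i z_i^{(t)} Q_i$. Replacing $A^{(t)}$ by its Hermitian part $H^{(t)} = (A^{(t)} + (A^{(t)})^*)/2$ is harmless since $\rho$ is Hermitian, and we have the crucial width bound $\opnorm{H^{(t)}} \le \sum_i |z_i^{(t)}|\opnorm{Q_i} \le Mw$.

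Second, run the standard MMW loop: initialize $\rho^{(1)} = \I/d$, and at iteration $t$ compute $\vec{z}^{(t)}$ and $H^{(t)}$ as above and then update
\[
  \rho^{(t+1)} \;=\; \frac{\exp\!\bigl(-\eta\sum_{s\le t} H^{(s)}\bigr)}{\tr\exp\!\bigl(-\eta\sum_{s\le t} H^{(s)}\bigr)}.
\]
Choosing $\eta = \Theta(\epsilon/(Mw)^2)$ and $T = O\!\bigl((Mw)^2\log d/\epsilon^2\bigr)$, the standard MMW regret bound implies that the average payoff $\tilde{d} = \tfrac{1}{T}\sum_{t=1}^T \Rep\ip{\vec{p}-\vec{q}(\rho^{(t)})}{\vec{z}^{(t)}}$ approximates the equilibrium value $\text{dis}(\vec{p})$ to within additive error $\epsilon$. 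Return $\tilde{d}$. Each iteration performs only $M$ traces $\ip{Q_i}{\rho^{(t)}}$, a coordinatewise normalization, a linear combination of $M$ matrices, and one matrix exponential on a $d\times d$ Hermitian matrix, all of which run in $\text{poly}(d)$ time; multiplying by $T$ gives the $\text{poly}(d,M,w,1/\epsilon)$ bound.

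The main obstacle will be the parallel (NC) claim, because the MMW iterations are inherently sequential: $\rho^{(t+1)}$ depends on all previous $H^{(s)}$. Under the hypothesis $M,w,1/\epsilon \in O(\text{poly-log}(d))$, however, $T$ is itself polylogarithmic in $d$, so it suffices to exhibit an NC circuit (in $d$) for one iteration and compose polylog many copies. Matrix addition, trace, scalar-matrix products, and inner products are textbook NC; the delicate ingredient is the matrix exponential, which I would handle either by a Hermitian-eigendecomposition subroutine in NC followed by componentwise exponentiation of eigenvalues, or, more robustly, by truncating the Taylor series of $\exp(\cdot)$ at a degree that suffices for the $\epsilon$-precision (noting $\opnorm{\eta \sum_{s\le t}H^{(s)}}$ remains polylog-bounded) and evaluating it by repeated squaring in parallel. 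Either route puts each iteration in NC, and composing polylog many of them preserves NC membership, completing the lemma.
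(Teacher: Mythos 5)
Your proposal is correct in substance and follows essentially the same route as the paper's Appendix~\ref{app:proof_lm_close}: phase-aligned best response for the $\vec{z}$-player, matrix multiplicative weights over $\density{\H}$ with width $O(Mw)$ and $T=O\bigl((Mw)^2\log d/\epsilon^2\bigr)$ rounds, the time-averaged payoff as the output, and \class{NC} membership by composing polylogarithmically many \class{NC} iterations. One slip to fix when writing it out: since your $H^{(t)}$ is the matrix the minimizing player wants to \emph{maximize} against (the payoff is $\Rep\ip{\vec{p}}{\vec{z}^{(t)}}-\ip{H^{(t)}}{\rho}$ up to convention), the exponent in your update should be $+\eta\sum_{s\le t}H^{(s)}$ --- equivalently, feed MMW the shifted loss matrix $\Rep\ip{\vec{p}}{\vec{z}^{(t)}}\I-H^{(t)}+2Mw\I$ as the paper does --- whereas $\exp\bigl(-\eta\sum_{s\le t}H^{(s)}\bigr)$ as written drives $\rho^{(t)}$ toward the wrong eigenspace and breaks the regret bound.
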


\subsection*{Construction of the $\epsilon$-net}
We are now ready to show the construction of the $\epsilon$-net of
$\sspace{\Q}$. As mentioned before, this construction contains
two steps below. Given any $\Q(M,w)$ and $\epsilon>0$,
\begin{itemize}
 \item Construct the $\epsilon$-net of the set $\sraw{M,w}$ with the metric induced from the $\nL_1$ norm. Denote such an $\epsilon$-net by $\R_\epsilon$.
 \item For each point $\vec{p} \in\R_\epsilon$, determine
 $\text{dis}(\vec{p})$ and select it to $\N_\epsilon$ if
 dis$(\vec{p})\leq \epsilon$. We claim $\N_\epsilon$ is the
 $\epsilon$-net of $(\sspace{\Q}, \nL_1)$.
\end{itemize}
The construction for the first step is rather routine. Creating an
$\epsilon'$-net $T_\epsilon'$ over a bounded complex region $\{z\in
\mathbb{C}: \snorm{z}\leq w\}$ is simple: we can place a 2D grid
over the complex plane to cover the disk $\snorm{z}\leq w$.  Simple
argument shows $|T_\epsilon'| \in O(\frac{w^2}{\epsilon'^2})$.  Then
$\R_\epsilon$ can be obtained by the cross-product
$\underbrace{T_\epsilon'\times \cdots \times T_\epsilon'}_{M \text{
times}}$. To ensure the closeness in the $\nL_1$ norm, we will choose
$\epsilon'=\epsilon/M$.

\begin{theorem} \label{thm:epsilon_net_main}
The $\N_\epsilon$ constructed above is indeed an $\epsilon$-net of
$(\sspace{\Q}, \nL_1)$ with cardinality at most
$O((\frac{w^2M^2}{\epsilon^2})^M )$. Furthermore, for any point
$\vec{n} \in \N_\epsilon$, we have $\text{dis}(\vec{n})\leq
\epsilon$.
\end{theorem}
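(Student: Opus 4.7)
The plan is to verify three sub-claims in order: the cardinality bound, the guarantee $\text{dis}(\vec{n})\leq\epsilon$ for every $\vec{n}\in\N_\epsilon$, and the $\epsilon$-net property itself. The first two are essentially by construction; only the last has content, and I expect it to reduce to a one-line triangle-inequality argument once the right containment is identified.

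For the cardinality, I would unfold the product construction. A uniform 2D grid on the disk $\{z\in\mathbb{C}:\snorm{z}\leq w\}$ with spacing chosen so every disk point sits within modulus $\epsilon'$ of some grid vertex gives $|T_{\epsilon'}|\in O(w^2/\epsilon'^2)$. The $M$-fold Cartesian product then has cardinality $O((w^2/\epsilon'^2)^M)$, while coordinate-wise closeness of $\epsilon'$ accumulates to at most $M\epsilon'$ in the $\nL_1$ metric. The choice $\epsilon'=\epsilon/M$ therefore simultaneously makes $\R_\epsilon$ an $\epsilon$-net of $\sraw{M,w}$ under $\nL_1$ and yields the advertised $O((w^2M^2/\epsilon^2)^M)$ bound, which passes to $\N_\epsilon$ since $\N_\epsilon\subseteq\R_\epsilon$. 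The guarantee $\text{dis}(\vec{n})\leq\epsilon$ is then nothing more than the selection rule used to build $\N_\epsilon$; any rounding loss from the approximate checker of Lemma~\ref{lm:closeness_check} can be absorbed by running that subroutine at a smaller tolerance, so I would treat $\text{dis}$ as an exact oracle inside the proof.

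The core step is the $\epsilon$-net property. The observation to highlight is the containment $\sspace{\Q}\subseteq\sraw{M,w}$: for any $\rho\in\density{\H}$ and each $i$,
\[
  |\ip{Q_i}{\rho}|\;\leq\;\opnorm{Q_i}\,\trnorm{\rho}\;\leq\;w,
\]
by trace $1$-$\infty$ Hölder, so every coordinate of $\vec{q}(\rho)$ lies in the disk of radius $w$. Given an arbitrary $\vec{q}\in\sspace{\Q}$, the $\epsilon$-net property of $\R_\epsilon$ over $\sraw{M,w}$ produces $\vec{p}\in\R_\epsilon$ with $\onenorm{\vec{q}-\vec{p}}\leq\epsilon$. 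Using $\vec{q}$ itself as a witness in the definition of $\text{dis}(\vec{p})$ gives $\text{dis}(\vec{p})\leq\onenorm{\vec{q}-\vec{p}}\leq\epsilon$, so $\vec{p}$ passes the selection and lies in $\N_\epsilon$, exhibiting an $\epsilon$-close neighbour of $\vec{q}$ inside $\N_\epsilon$. I do not anticipate any genuine obstacle: the only place requiring care is the rescaling $\epsilon'=\epsilon/M$, which is what forces the factor $M^2$ inside the cardinality and is the sole reason the net does not scale more favourably in the number of operators.
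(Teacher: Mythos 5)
Your proposal is correct and follows essentially the same route as the paper's proof: build the product grid $\R_\epsilon$, use $\sspace{\Q}\subseteq\sraw{M,w}$ to find a grid point within $\nL_1$-distance $\epsilon$ of any $\vec{q}\in\sspace{\Q}$, and observe that such a point survives the selection since $\vec{q}$ itself witnesses $\text{dis}\leq\epsilon$. Your added remarks --- the H\"{o}lder bound justifying the containment and the note on absorbing the additive error of the distance subroutine of Lemma~\ref{lm:closeness_check} --- are details the paper leaves implicit, but they do not change the argument.
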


\begin{proof}
First we show $\R_\epsilon$ is indeed an $\epsilon$-net of
$(\sraw{M,w}, \nL_1)$. To that end, consider any point $\vec{p} \in
\sraw{M,w}$. From the construction of $\R_\epsilon$, there is some
point $\vec{q} \in \R_\epsilon$ such that
$\snorm{\vec{p}-\vec{q}}_\infty \leq \epsilon'$. Then we have
$\snorm{\vec{p}-\vec{q}}_1 \leq M \snorm{\vec{p}-\vec{q}}_\infty
\leq M \epsilon' \leq \epsilon$. Since $\N_\epsilon \subseteq
\R_\epsilon$, one has $|\N_\epsilon| \leq |\R_\epsilon| \in
O((\frac{w^2M^2}{\epsilon^2})^M)$.

In order to show $\N_\epsilon$ is the required $\epsilon$-net,
consider any point $\vec{p} \in \sspace{\Q}$. Since $\sspace{\Q}
\subseteq \sraw{M,w}$, there exists a point $\vec{p'} \in
\R_\epsilon$ such that $\snorm{\vec{p}-\vec{p'}}_1 \leq \epsilon$.
Hence we have $\text{dis}(\vec{p'})\leq \epsilon$ and the point
$\vec{p'}$ will be selected, namely $\vec{p'} \in \N_\epsilon$.
Finally, it is a simple consequence of the selection process that
every point $\vec{n} \in \N_\epsilon$ has $\text{dis}(\vec{n})\leq
\epsilon$ .
\end{proof}

\noindent \textbf{Remarks.} If one choose $\Q$ to be $
 \Q(d^2, 1)=\{\ketbra{i}{j}: i,j=1,\cdots, d\}$, one can generate the $\epsilon$-net of the density operator set
with the $\nL_1$ norm in the method described above. It is akin to generating
an $\epsilon$-net for every entry of the density operator. At the other
extreme, one can also efficiently generate the $\epsilon$-net of
a small size $\sspace{\Q}$ even when the space dimension $d$ is
relatively large.

\section{The Main Algorithm} \label{sec:main}

In this section, we prove the main theorem.
Without loss of generality, we assume $\A_1,\A_2$ are
identical, and of dimension $d$ in Problem~\ref{prob:opt_sep}.
Moreover, our algorithm will deal with the set of product states
rather than separable states. Namely, we consider the following optimization problem.
\begin{align}
    \text{max:}\quad & \ip{Q}{\rho} \label{eqn:def_opt_product_problem}\\
    \text{subject to:}\quad & \rho=\rho_1 \otimes \rho_2,  \rho_1 \in \density{\A_1}, \rho_2 \in \density{\A_2}. \nonumber
\end{align}
It is easy to see these two optimization problems are equivalent
since product states are extreme points of the set of separable
states. Our algorithm works for both maximization and minimization
of the objective function. In fact, both results can be obtained at
the same time. Since our algorithm naturally extends to multipartite
cases, we will demonstrate the algorithm for the $k$-partite version
first, and then obtain the solution for Problem~\ref{prob:opt_sep} as
a special case when $k=2$.

\begin{problem}[k-partite version] \label{prob:opt_k_party}
Given any Hermitian matrix $Q$ over  $\A_1 \ot \cdots \ot \A_k$
($k\geq2$), compute the optimum value $\OptSep{Q}$ of the following
optimization problem to precision $\delta$.
\begin{align}
      \text{max:}\quad & \ip{Q}{\rho} \\
      \text{subject to:}\quad & \rho=\rho_1 \ot \cdots \ot \rho_k, \forall i, \rho_i \in \density{\A_i}. \nonumber
\end{align}
\end{problem}

\noindent Before describing the algorithm, we need some terminology
about the \emph{decomposability} of a multi-partite operator. Any
Hermitian operator $Q$ over $\A_1 \ot \A_2 \ot \cdots \ot
\A_k$ is called \emph{$M$-decomposable} if there exists
$(Q^t_1,Q^t_2, \cdots, Q^t_M) \in \lin{\A_t}^M$ for t=1,2,...,
k such that
\[
  Q=\sum_{i=1}^M Q^1_i \ot Q^2_i \ot \cdots \ot Q^{k-1}_i \ot Q^k_i.
\]
To facilitate the use of $\epsilon$-net, we adopt a slight variation
of the decomposability above. Let $\vec{w} \in \mathbb{R}^k_{+}$
denote the widths of operators over each $\A_i$. Any $Q$ is called
\emph{$(M, \vec{w})$-decomposable} if $Q$ is $M$-decomposable and
the widths of those operators in the decomposition are bounded in
the sense that $\max_i \opnorm{Q^t_i}\leq w_t$ for t=1,2,..., k. It
is noteworthy to mention that the decomposability defined above is
related to the concept tensor rank~\footnote{Our definition should
be more accurately related to the Kronecker-Product rank defined in
\cite{RvL11}, a special case of the more general concept tensor
rank. } defined in tensor product spaces. Precisely for any
hermitian operator $Q$ over $\A_1 \otimes \A_2 \ot \cdots \ot \A_k$,
its tensor rank \emph{$\tensorrank{Q}$} is defined to be $\min\{M| Q
\text{ is }M\text{-decomposable}\}$. Its bounded tensor rank
\emph{$\btrank{Q}{\vec{w}}$} is defined to be $\min\{M| Q \text{ is
}(M,\vec{w})\text{-decomposable}\}$.


By definition, we have $\tensorrank{Q}$ (resp.
$\btrank{Q}{\vec{w}}$) is the minimum $M$ that $Q$ can be $M$ (resp.
($M, \vec{w}$))-decomposable. However, given the representation $Q$
as input, it is hard in general to compute $\tensorrank{Q}$,
$\btrank{Q}{\vec{w}}$, or its corresponding decomposition. Therefore
it is hard to make use of the optimal decomposition when $Q$ is the
only input. Instead, for any $(M,\vec{w})$-decomposable $Q$ we
assume its corresponding decomposition is also a part of the input
to our algorithm.

\begin{figure}[t]
\noindent\hrulefill
\begin{mylist}{8mm}
\item[1.]
Let $\Q_t(M,w_t)=(Q^t_1,Q^t_2,\cdots, Q^t_M)$ for t=1,..., k-1. Let
  $W=\Pi_{i=1}^k w_i$. Generate the $\epsilon_t$-net (by Theorem~\ref{thm:epsilon_net_main}) of
  $(\sspace{\Q_t}, \nL_1)$ for each t=1,..., k-1 with $\epsilon_t=w_t \delta/(k-1)W$ and
  denote such a set by $\N^t_{\epsilon_t}$. Also let $\text{OPT}$ store
the optimum value of the maximization problem.
\item[2.]
For each point $\vec{q}=(\vec{q}^1, \vec{q}^2,\cdots \vec{q}^{k-1})
  \in \N^1_{\epsilon_1} \times \N^2_{\epsilon_2} \times \cdots
  \times
  \N^{k-1}_{\epsilon_{k-1}}$, let $Q^{k}$ be
  \[
     Q^{k}=\sum_{i=1}^M q^1_i q^2_i \cdots q^{k-1}_i Q^k_i,
  \]
  and calculate $\tilde{Q}^k=\frac{1}{2} (Q^k+Q^{k*})$. Then compute the
maximum eigenvalue of $\tilde{Q}^k$, denoted by
$\lambda_{\max}(\vec{q})$. Update OPT as follows: $ \text{OPT}=
\max\{\text{OPT}, \lambda_{\max}(\vec{q})\}$.

\item[3.]
Return $\text{OPT}$.
\end{mylist}
\noindent\hrulefill \caption{The main algorithm with precision
$\delta$. } \label{fig:main_alg}
\end{figure}

\begin{theorem} \label{thm:main_alg}
Let $Q$ be some $(M,\vec{w})$-decomposable Hermitian over
$\A_1\otimes \A_2 \ot \cdots \ot \A_k$ (each $\A_i$ is of dimension
$d$) and $\delta>0$. Also let $(Q^t_1,Q^t_2,\cdots, Q^t_M)
,t=1,2,\cdots, k$ be the operators in the corresponding
decomposition of $Q$. The algorithm shown in Fig.~\ref{fig:main_alg}
approximates the optimum value $\OptSep{Q}$ of
Problem~\ref{prob:opt_k_party} with additive error $\delta$.
Furthermore, the whole algorithm runs in
$O((\frac{(k-1)^2W^2M^2}{\delta^2})^{(k-1)M} ) \times\text{poly}(d,
M, k, W, 1/\delta)$ time.
\end{theorem}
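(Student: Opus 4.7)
The plan is to analyze the algorithm by separate soundness and completeness bounds on the returned value, each proved by a telescoping (hybrid) argument over the $k-1$ parties whose reduced states are approximated through $\epsilon$-nets; the running-time bound then follows by counting.

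\textbf{Setup.} The maximum of $\ip{Q}{\rho}$ over separable $\rho$ equals its maximum over product $\rho=\rho_1\ot\cdots\ot\rho_k$, and the given decomposition yields
\[
\ip{Q}{\rho_1\ot\cdots\ot\rho_k}=\sum_{i=1}^M \prod_{t=1}^k\ip{Q^t_i}{\rho_t}.
\]
With $\vec{q}^t(\rho_t)\in\sspace{\Q_t}$ as in \citeequ{\ref{eqn:def_vec_q}}, the key observation is that once $\vec{q}^1,\ldots,\vec{q}^{k-1}$ are fixed, the objective becomes the linear functional $\rho_k\mapsto\ip{Q^k}{\rho_k}$ with $Q^k=\sum_i q^1_i\cdots q^{k-1}_i\,Q^k_i$; since the original value is real, its maximum over $\density{\A_k}$ equals $\lambda_{\max}(\tilde Q^k)$, which is exactly what Step~2 computes. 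Hence it suffices to enumerate $(\vec{q}^1,\ldots,\vec{q}^{k-1})$ over the Cartesian product of the $\epsilon_t$-nets from Theorem~\ref{thm:epsilon_net_main} rather than over all $\rho_t$'s directly.

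\textbf{Error analysis (the main obstacle).} The technical crux is bounding the drift of $\prod_{t=1}^{k-1}q^t_i$ when each $\vec{q}^t$ is replaced by an $\epsilon_t$-close vector. The tool is the standard hybrid inequality
$|\prod_t a_t-\prod_t b_t|\le\sum_t|a_t-b_t|\prod_{s\ne t}\max(|a_s|,|b_s|)$.
For \emph{completeness}, fix an optimal $\rho_1^*,\ldots,\rho_k^*$ and pick $\vec{q}^t\in\N^t_{\epsilon_t}$ with $\snorm{\vec{q}^t-\vec{q}^t(\rho_t^*)}_1\le\epsilon_t$. Using $|\ip{Q^t_i}{\rho_t^*}|\le w_t$ and $|q^t_i|\le w_t+\epsilon_t$, summing the hybrid bound over $i$ converts $\sum_i|q^t_i-\ip{Q^t_i}{\rho_t^*}|$ into the $\nL_1$ distance $\le\epsilon_t$; multiplying by $|\ip{Q^k_i}{\rho_k^*}|\le w_k$ gives total discrepancy $\le w_k\sum_{t=1}^{k-1}\epsilon_t\prod_{s\ne t,\,s\le k-1}w_s+O(\delta^2)$, which evaluates to $\delta+O(\delta^2)$ by the choice $\epsilon_t=w_t\delta/((k-1)W)$ (the $w_t$ factor in $\epsilon_t$ cancels $\prod_{s\ne t}w_s$ to leave $W/w_k$, and the sum over $t$ contributes $k-1$). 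Hence $\lambda_{\max}(\vec{q})\ge\OptSep{Q}-\delta$ at this net point. For \emph{soundness}, Theorem~\ref{thm:epsilon_net_main} also guarantees, for every enumerated $\vec{q}^t$, some $\rho_t\in\density{\A_t}$ with $\snorm{\vec{q}^t-\vec{q}^t(\rho_t)}_1\le\epsilon_t$; taking $\rho_k$ to be the pure state attaining $\lambda_{\max}(\tilde Q^k)$ and repeating the same hybrid estimate yields $\lambda_{\max}(\vec{q})\le\ip{Q}{\rho_1\ot\cdots\ot\rho_k}+\delta\le\OptSep{Q}+\delta$.

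\textbf{Running time.} The number of enumerated tuples is $\prod_{t=1}^{k-1}|\N^t_{\epsilon_t}|$; since $|\N^t_{\epsilon_t}|=O((w_t^2M^2/\epsilon_t^2)^M)$ and $\epsilon_t=w_t\delta/((k-1)W)$ the factors of $w_t$ cancel, leaving $O\bigl(((k-1)^2W^2M^2/\delta^2)^{(k-1)M}\bigr)$ tuples. Per tuple, Step~2 performs $M$ scalar-matrix products, a Hermitization, and one maximum-eigenvalue computation, all in $\text{poly}(d,M,k,W,1/\delta)$ time; generating each $\N^t_{\epsilon_t}$ via Lemma~\ref{lm:closeness_check} fits in the same bound, yielding the stated overall complexity.
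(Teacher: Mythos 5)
Your proposal is correct and follows essentially the same route as the paper: enumerate the $\epsilon_t$-nets of $\sspace{\Q_t}$ for $t=1,\dots,k-1$, reduce the remaining optimization over $\density{\A_k}$ to $\lambda_{\max}(\tilde Q^k)$, and control the error with the same telescoping/hybrid bound and the two-sided use of the net properties (covering plus $\text{dis}(\vec{q}^t)\le\epsilon_t$), with the same counting for the running time. The only cosmetic difference is your slightly looser bound $|q^t_i|\le w_t+\epsilon_t$, which yields $\delta+O(\delta^2)$ rather than exactly $\delta$; since the net points lie in $\sraw{M,w_t}$ this slack is unnecessary, and in any case it is absorbed by a constant rescaling of $\epsilon_t$.
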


\begin{proof}
Let's first prove the correctness of the algorithm. By choosing
$\Q_t(M,w_t)=(Q^t_1,Q^t_2,\cdots, Q^t_M)$ for t=1,...,k-1, the
algorithm first generates the $\epsilon$-net $\N_{\epsilon_t}$ of
each $(\sspace{\Q_t}, \nL_1)$ , whose correctness is guaranteed by
Theorem~\ref{thm:epsilon_net_main}. By substituting the identity
$Q=\sum_{i=1}^M Q^1_i \otimes Q^2_i \ot \cdots \ot Q^{k-1}_i$, the
optimization problem becomes
\begin{align*}
    \text{max:}\quad & \ip{\sum_{i=1}^M p^1_ip^2_i\cdots p^{k-1}_i Q^k_i}{\rho_k} \\
    \text{subject to:}\quad & \forall t \in \{1,\cdots, k-1\}, \vec{p_t} \in \sspace{\Q_t(M,w_t)}, \text{ and } \rho_k \in
    \density{\A_k}.
\end{align*}
Thus, solving the optimization problem amounts to first enumerating
$\vec{p}_t \in \sspace{\Q_t(M,w_1)}$ for each $t$, and then solving
the optimization problem over $\density{\A_k}$.

Consider any point $\vec{p}=(\vec{p}^1,\vec{p}^2,\cdots,
\vec{p}^{k-1}) \in \sspace{\Q_1}\times \cdots \times
\sspace{\Q_{k-1}}$. Due to Theorem~\ref{thm:epsilon_net_main}, there
is at least one point $\vec{q}=(\vec{q}^1, \vec{q}^2,\cdots
\vec{q}^{k-1})
  \in \N^1_{\epsilon_1} \times \N^2_{\epsilon_2} \times
  \cdots \times \N^{k-1}_{\epsilon_{k-1}}$ such that $\onenorm{\vec{q}^t-\vec{p}^t}\leq \epsilon_t$ for t=1,..,k-1.
The choice of $\tilde{Q}^k$ is to symmetrize $Q^k$ where the
latter is not guaranteed to be Hermitian because $\vec{q}$ only
comes from an $\epsilon$-net. With $\tilde{Q}^k$ being Hermitian, it
is clear that $\lambda_{\max}(\vec{q})=\max_{\rho_k \in
\density{\A_k}} \ip{\tilde{Q}^k}{\rho_k}$. Now let's analyze how
much error will be induced in this process.

Let $P^k(\vec{p})=\sum_{i=1}^M p^1_ip^2_i\cdots p^{k-1}_i Q^k_i$ and
$\tilde{P}^k=\frac{1}{2}(P^k+P^{k*})$. It is not hard to see that
$P^k=\tilde{P}^k$. The error bound is achieved by applying a chain
of triangle inequalities as follows. Firstly, one has
\[
\opnorm{\tilde{P}^k-\tilde{Q}^k}  =  \opnorm{\frac{1}{2} (
  P^k-Q^k) + \frac{1}{2} (P^{k*}-Q^{k*})} \leq \frac{1}{2} (\opnorm{P^k-Q^k}+ \opnorm{P^{k*}-Q^{k*}}  ) \\
    =  \opnorm{P^k-Q^k}.
\]
Then we substitute the expressions for $P^k, Q^k$ and apply the
standard hybrid argument.
\begin{eqnarray*}
 \opnorm{P^k-Q^k}& = & \opnorm{\sum_{i=1}^M (p^1_ip^2_i\cdots p^{k-1}_i-q^1_iq^2_i\cdots q^{k-1}_i) Q^k_i}
   \\
   & = & \opnorm{\sum_{i=1}^M \sum_{t=1}^{k-1} (q^1_i \cdots q^{t-1}_i p^t_i p^{t+1}_i \cdots p^{k-1}_i-q^1_i \cdots q^{t-1}_i q^t_i p^{t+1}_i \cdots p^{k-1}_i)
   Q^k_i},
\end{eqnarray*}
which is immediately upper bounded by the sum of the following
terms,
\[
  \sum_{i=1}^M |p^1_i-q^1_i| |p^2_i \cdots p^{k-1}_i|
\opnorm{Q^k_i},  \sum_{i=1}^M |q^1_i| |p^2_i-q^2_i| |p^3_i \cdots
p^{k-1}_i|
   \opnorm{Q^k_i}
     , \cdots , \sum_{i=1}^M |q^1_i \cdots q^{k-2}_i| |p^{k-1}_i-q^{k-1}_i|
   \opnorm{Q^k_i}.
\]
As the $t^{\text{th}}$ term above can be upper bounded by
$\epsilon_t W/w_t$ for each t=1,...,k-1, we have,
\[
\opnorm{\tilde{P}^k-\tilde{Q}^k} \leq \epsilon_1 W/w_1+ \epsilon_2
W/w_2 +\cdots  + \epsilon_{k-1} W/w_{k-1} =
\underbrace{\frac{\delta}{k-1} +\cdots+
   \frac{\delta}{k-1}}_{\text{k-1 terms}} =\delta.
\]
Hence the optimum value for any fixed $\vec{p}$ won't differ too
much from the one for its approximation $\vec{q}$ in the
$\epsilon$-net. This is because
\[
\max_{\rho_k \in \density{\A_k}} \ip {\tilde{P}^k}{\rho_k}  =
\max_{\rho_k \in \density{\A_k}} \ip
  {\tilde{Q}^k}{\rho_k} + \ip{\tilde{P}^k-\tilde{Q}^k}{\rho_k}.
\]
By H\"{o}lder Inequalities we have
$|\ip{\tilde{P}^k-\tilde{Q}^k}{\rho_k}| \leq
\opnorm{\tilde{P}^k-\tilde{Q}^k}\trnorm{\rho_k} \leq \delta$ and
thus,
\[
   \lambda_{\max}(\vec{q}) -\delta \leq \max_{\rho_k \in \density{\A_k}} \ip {\tilde{P}^k(\vec{p})}{\rho_k} \leq
  \lambda_{\max}(\vec{q}) +\delta.
\]
We now optimize $\vec{p}$ over $\sspace{\Q_1}\times \cdots \times
\sspace{\Q_{k-1}}$ and the corresponding $\vec{q}$ will run over the
$\epsilon$-net $\N^1_{\epsilon_1} \times \N^2_{\epsilon_2} \times \cdots \times
  \N^{k-1}_{\epsilon_{k-1}}$. As every point
$\vec{q} \in \N^1_{\epsilon_1} \times \N^2_{\epsilon_2} \times \cdots \times
  \N^{k-1}_{\epsilon_{k-1}}$ is also close to $\sspace{\Q_1}\times \cdots
\times \sspace{\Q_{k-1}}$ in the sense that
$\text{dis}(\vec{q}^t)\leq \epsilon_t$ for each t, we have
\[
  \max_{\vec{q} \in \N^1_{\epsilon_1} \times \N^2_{\epsilon_2} \times \cdots \times
  \N^{k-1}_{\epsilon_{k-1}}} \lambda_{\max}(\vec{q}) -\delta \leq \max_{\vec{p} \in \sspace{\Q_1}\times \cdots
\times \sspace{\Q_{k-1}}} \max_{\rho_k \in \density{\A_k}} \ip
{\tilde{P}^k(\vec{p})}{\rho_k} \leq
  \max_{\vec{q} \in \N^1_{\epsilon_1} \times \N^2_{\epsilon_2} \times \cdots \times
  \N^{k-1}_{\epsilon_{k-1}}} \lambda_{\max}(\vec{q}) +\delta.
\]
Finally, it is not hard to see that $\text{OPT}=\max_{\vec{q} \in
\N^1_{\epsilon_1} \times \N^2_{\epsilon_2} \times \cdots \times
  \N^{k-1}_{\epsilon_{k-1}}} \lambda_{\max}(\vec{q})$ and therefore
\[
 \text{OPT}-\delta \leq \OptSep{Q} \leq
 \text{OPT}+\delta.
\]
Now let us analyze the efficiency of this algorithm. The total
number of points in the $\epsilon$-net $\N^1_{\epsilon_1} \times
\N^2_{\epsilon_2} \times  \cdots \times \N^{k-1}_{\epsilon_{k-1}}$
is upper bounded by $O((\frac{(k-1)^2W^2M^2}{\delta^2})^{(k-1)M} )$
by Theorem~\ref{thm:epsilon_net_main}. For each point $\vec{q}$, the
generation of such a point will cost time polynomial in $d, M, W,
1/\delta$ (this part is done through the calculation of
$\text{dis}(\vec{q})$. See Lemma~\ref{lm:closeness_check}. ). After
the generation process, one needs to calculate $\tilde{Q}^k$ and its
maximum eigenvalue for each point, which can be done in time
polynomial in $d,k,M$.  Thus, the total running time is bounded by
 $O((\frac{(k-1)^2W^2M^2}{\delta^2})^{(k-1)M} )
\times\text{poly}(d, M, k, W, 1/\delta)$.
\end{proof}

\noindent \textbf{Remarks.} There are a few remarks to make about
Theorem~\ref{thm:main_alg}. First, it is straightforward to extend
the concept of decomposability  to its approximate version. For
instance, any Hermitian $Q$ is called \emph{$\epsilon$-approximate}
($M,\vec{w}$)-decomposable if there exists some
($M,\vec{w}$)-decomposable $\tilde{Q}$, such that
$\snorm{Q-\tilde{Q}} \leq \epsilon$, where the norm could be either
the operator norm or the \emph{injective} tensor norm. It is easy to
verify that the same algorithm solves $\OptSep{Q}$ approximately.

Second, all operations in the algorithm described in
Fig.~\ref{fig:main_alg} can be implemented efficiently in parallel
in some situation. This is because fundamental operations of
matrices can be done in \class{NC} and the calculation of
$\text{dis}(\vec{p})$ can be done in \class{NC} (See
Lemma~\ref{lm:closeness_check}) when $M, W, k, 1/\delta$ are in nice
forms of $d$. Thus, we can apply the observation stated in the
introduction and prove the algorithm in Fig.~\ref{fig:main_alg} can
also be made space-efficient. To facilitate the later use of this
result, we will change the input size as follows.

\begin{corollary} \label{cor:nc_main}
Let $n$ be the input size such that $d=\exp(\text{poly(n)})$, if
$W/\delta \in O(\text{poly}(n)), kM \in O(\text{poly(n)})$, then
$\OptSep{Q}$ can be approximated with additive error $\delta$ in
\class{PSPACE}.
\end{corollary}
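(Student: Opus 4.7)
The plan is to recast the algorithm of Figure~\ref{fig:main_alg} as a uniform Boolean circuit of depth $\text{poly}(n)$ and size $\exp(\text{poly}(n))$, i.e., to place it in the class \class{NC(poly)}, and then invoke the relation $\class{NC(poly)} = \class{PSPACE}$ from~\cite{Borodin77}. Because $d = \exp(\text{poly}(n))$, any procedure that is NC with respect to $d$ (depth $\text{poly-log}(d)$ and size $\text{poly}(d)$) is automatically \class{NC(poly)} in $n$, so it suffices to show that every step of the algorithm admits an NC implementation in $d$ under the stated hypotheses.

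First I would bound the enumeration. By Theorem~\ref{thm:epsilon_net_main}, the product net has cardinality at most $O(((k-1)^2 W^2 M^2/\delta^2)^{(k-1)M})$, whose logarithm is $(k-1)M \cdot O(\log((k-1)WM/\delta))$. Under $kM, W/\delta \in O(\text{poly}(n))$ this is $\text{poly}(n)$, so each candidate point $\vec{q} \in \R^1_{\epsilon_1}\times\cdots\times\R^{k-1}_{\epsilon_{k-1}}$ is described by a $\text{poly}(n)$-bit string and the enumeration can be unrolled into $\exp(\text{poly}(n))$ parallel branches. For each $\vec{q}$, the algorithm must perform three tasks: (i) filter $\vec{q}$ into $\N^1_{\epsilon_1}\times\cdots\times\N^{k-1}_{\epsilon_{k-1}}$ by checking $\text{dis}(\vec{q}^t) \le \epsilon_t$ for each $t$; (ii) assemble $\tilde{Q}^k = \frac{1}{2}(Q^k + Q^{k*})$, where $Q^k = \sum_{i=1}^M q^1_i q^2_i \cdots q^{k-1}_i Q^k_i$; and (iii) compute the largest eigenvalue of $\tilde{Q}^k$. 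For (i), the hypotheses yield $M \in O(\text{poly}(n))$ and $w_t/\epsilon_t = (k-1)W/\delta \in O(\text{poly}(n))$, both of which are poly-logarithmic in $d$; hence Lemma~\ref{lm:closeness_check} places the distance estimation in NC. Tasks (ii) and (iii) are fundamental arithmetic and spectral operations on $d\times d$ Hermitian matrices and are known to lie in NC with respect to $d$~\cite{vzGathen93}. A final balanced tree of max operations over the $\exp(\text{poly}(n))$ candidate values contributes only $O(\text{poly}(n))$ additional depth.

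The main obstacle is quantitative bookkeeping rather than any conceptual novelty: one must confirm that every subroutine invoked meets the exact parameter regime required by the lemma that certifies its parallelizability, because composing NC procedures only preserves the depth bound when parameters remain poly-logarithmic in the relevant dimensions. The tightest constraint is the one for Lemma~\ref{lm:closeness_check}, which is why the hypotheses $W/\delta, kM \in O(\text{poly}(n))$ of the corollary are tailored exactly to make $M$, $w_t$, and $1/\epsilon_t$ all poly-logarithmic in $d$. Once this is verified, the composition yields a uniform circuit of depth $\text{poly}(n)$ and size $\exp(\text{poly}(n))$, and the theorem of~\cite{Borodin77} converts it into the desired \class{PSPACE} algorithm approximating $\OptSep{Q}$ within additive error $\delta$.
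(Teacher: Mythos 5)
Your proposal is correct and follows essentially the same route as the paper: both rest on the net-size bound from Theorem~\ref{thm:epsilon_net_main}, the \class{NC} implementation of the distance check from Lemma~\ref{lm:closeness_check}, \class{NC} linear algebra for the per-point eigenvalue computation, and Borodin's $\class{NC(poly)}=\class{PSPACE}$. The only (immaterial) difference is packaging — you unroll the enumeration into one exponential-size, polynomial-depth uniform circuit with a final max-tree, whereas the paper runs the enumeration sequentially in polynomial space and composes it with the \class{PSPACE} subroutines obtained from the \class{NC} components.
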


\begin{proof}
Here we present an argument that composes space-efficient algorithms
directly.
Given $Q$ and its decomposition as input, consider the following algorithm
\begin{enumerate}
 \item Enumerate each point $\vec{p}=(\vec{p}_1, \vec{p}_2, \cdots, \vec{p}_{k-1})$ from the raw set $\R^1_{\epsilon_1}
 \times \cdots \times \R^{k-1}_{\epsilon_{k-1}}$.
 \item Compute $\text{dis}(\vec{p}_t)$ for each t=1,...,k-1. If
 $\vec{p}$ is a valid point in the epsilon-net, then we continue
 with the rest part in Step 2 of the algorithm in
 Fig.~\ref{fig:main_alg}.
 \item Compare the values obtained by each point $\vec{p}$ and keep
 the optimum one.
\end{enumerate}
Given the condition $W/\delta \in O(\text{poly}(n)), kM \in
O(\text{poly(n)})$, the first part of the algorithm can be done in
polynomial space. This is because in this case each point in the raw
set can be represented by polynomial space and therefore enumerated in polynomial space. The second part is more difficult. Computing
$\text{dis}(\vec{p}_t)$ for each t=1,...,k-1 can be done in
\class{NC(poly(n))} as shown in Lemma~\ref{lm:closeness_check}. Step
2 in the main algorithm only contains fundamental operations of
matrices and the spectrum decomposition. Thus, it also admits a
parallel algorithm in \class{NC(poly(n))}. One can easily compose
the two circuits and get a polynomial space implementation by
the relation \class{NC(poly)}=\class{PSPACE}~\cite{Borodin77}. The
third part can obviously be done in polynomial space.

Thus, by composing these three polynomial-space implementable parts,
one proves the whole algorithm can be done in \class{PSPACE}.
\end{proof}

\section{Simulation of several variants of QMA(2)} \label{sec:QMA_variants}
This section illustrates how one can make use of the algorithm shown
in Section~\ref{sec:main} (when $k$=2) to simulate some variants of
the complexity class \class{QMA(2)}. The idea is to show for those
variants, the corresponding POVM matrices of acceptance are
($M,\vec{w}$)-decomposable with small $M$s. Before we dive into the
details, let us recall the definition of the complexity class
\class{QMA(2)}.

\begin{definition} \label{def:QMA2}
A language $\mathcal{L}$ is in \class{QMA}$(2)_{m,c,s}$ if there
exists a polynomial-time generated family of quantum verification
circuits $Q=\{Q_n | n \in \mathbb{N}\}$ such that for any input $x$
of size $n$, the circuit $Q_n$ implements a two-outcome measurement
$\{Q^{\text{acc}}_x, \I-Q^{\text{acc}}_x\}$. Furthermore,
\begin{itemize}
  \item Completeness: If $x \in \mathcal{L}$, there exist 2 witness
  $\ket{\psi_1} \in \A_1, \ket{\psi_2} \in \A_2$, each of $m$ qubits, such that
  \[
    \ip{Q^{\text{acc}}_x}{\ketbra{\psi_1}{\psi_1} \ot
    \ketbra{\psi_2}{\psi_2}} \geq c.
  \]
  \item Soundness: If $x \notin \mathcal{L}$, then for any states $\ket{\psi_1} \in \A_1,
  \ket{\psi_2} \in \A_2$,
  \[
    \ip{Q^{\text{acc}}_x}{\ketbra{\psi_1}{\psi_1} \ot
    \ketbra{\psi_2}{\psi_2}} \leq s.
  \]
\end{itemize}
\end{definition}

We call \class{QMA(2)}=\class{QMA}$(2)_{\text{poly}(n),2/3,1/3}$. It
is easy to see that simulating the complexity class \class{QMA(2)}
amounts to distinguishing between the two promises of the maximum
acceptance probability, represented by the inner product
$\ip{Q^\text{acc}_x}{\rho}$,  over the set of all possible valid
strategies of the two provers, which is exactly
$\SepD{\A_1\ot\A_2}$. Note the maximum acceptance probability is
exactly $\OptSep{Q^\text{acc}_x}$ defined in
Problem~\ref{prob:opt_sep}. Thus, if one were able to distinguish
between the two promises of $\OptSep{Q^\text{acc}_x}$, one could
simulate this protocol with the same amount of resources (time or
space).

The first example is the variant with only logarithm-size proofs,
namely QMA$(2)_{O(\log(n)),2/3,1/3}$. It is not hard to find out the
corresponding POVMs of acceptance (i.e. $Q^\text{acc}_x$) need to be
(poly(n),$\vec{w}$)-decomposable since $\A_1,\A_2$ in this case are
only of polynomial dimension. Moreover, $\vec{w}$ could be $(1,1)$
in this case. Thus, it follows directly from
Corollary~\ref{cor:nc_main} that $\OptSep{Q^\text{acc}_x}$ can be
approximated in polynomial space. Namely,
\[
  \class{QMA}(2)_{O(\log(n)),2/3,1/3} \subseteq \class{PSPACE}.
\]

The next example is slightly less trivial. Before moving on, we need
some terminology about the quantum verification circuits $Q$. Assume
the input $x$ is fixed from now on. Let $\A_1,\A_2$ be the Hilbert
space of size $d_\A$ for the two proofs and let $\V$ be the
ancillary space of size $d_\V$. Note $d_\A d_\V$ is exponential in
$n$. Then the quantum verification process will be carried out on
the space $\A_1\ot \A_2 \ot \V$ with some initial state
$\ket{\psi_1} \ot \ket{\psi_2}\ot \ket{\vec{0}}$ where
$\ket{\psi_1}, \ket{\psi_2}$ are provided by the provers. The
verification process is also efficient in the sense that the whole
circuit only consists of polynomial elementary gates. Without loss
of generality, we can fix one universal gate set for the
verification circuits in advance. Particularly, we choose the
universal gate set to be single qubit gates plus the CNOT
gates~\cite{NielsenC00}. One can also choose other universal gate
sets without any change of the main result.

We categorize all elementary gates in the verification circuits into two
types. A gate is of \emph{type-I} if it only affects the qubits
within the same space (i.e, $\A_1,\A_2,\text{or},\V$). Otherwise,
this gate is of \emph{type-II}. It is easy to see single qubit gates
are always type-I gates. The only type-II gates are CNOT gates whose
control qubit and target qubit sit in different spaces. Let $p,r:
\mathbb{N} \rightarrow \mathbb{N}$ be polynomial-bounded functions.
A polynomial-time generated family of quantum verification circuits
$Q$ is called $Q[p,r]$ if each $Q_n$ only contains $p(n)$ type-I
elementary gates and $r(n)$ type-II elementary gates.

\begin{definition} \label{def:QMA2_gates}
 A language $\mathcal{L}$ is in
\class{QMA}$(2)_{m,c,s}[p,r]$ if $\mathcal{L}$ is in
\class{QMA}$(2)_{m,c,s}$ with some $Q[p,r]$ verification circuit
family.
\end{definition}

It is easy to see that $\class{QMA(2)}=\class{QMA(2)}[\text{poly,
poly}]$ from our definition. In the following we will show that when
the number of type-II gates is relatively small, one can simulate
this complexity model efficiently by the algorithm in
Fig.~\ref{fig:main_alg}.

\begin{lemma} \label{lm:POVM_polylog}
For any family of verification circuits $Q[p,r]$, the corresponding
POVM $Q^\text{acc}_x$ is $(4^{r(n)},(1,1))$-decomposable for any $n
\in \mathbb{N}$ and input $x$. Moreover, this decomposition can be
calculated in parallel with  $O(t(n) 4^{r(n)})\times \text{poly}(n)$
time.
\end{lemma}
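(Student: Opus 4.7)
My plan is to start from an explicit formula for the POVM element. Writing the verification unitary as $U = G_{t(n)}\cdots G_1$ (where $t(n)=p(n)+r(n)$) and assuming without loss of generality that the acceptance projector has the form $\Pi_\text{acc} = I_{\A_1}\ot I_{\A_2}\ot P_\V$ (the verifier reads off an output qubit inside the ancilla), the POVM element on $\A_1\ot\A_2$ is
\[
  Q^\text{acc}_x = \bra{\vec{0}}_\V\, U^* \Pi_\text{acc} U\, \ket{\vec{0}}_\V.
\]
The key observation I would exploit is that each type-II gate is a \textsc{cnot} whose control and target lie in two different subsystems among $\{\A_1,\A_2,\V\}$, and it admits an operator Schmidt decomposition of rank~$2$ across that bipartition:
\[
  \text{CNOT} = \ketbra{0}{0} \ot I + \ketbra{1}{1} \ot X,
\]
in which each of the four factors has operator norm~$1$.

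Next I would distribute this two-term decomposition through the product $U$. Indexing the binary choice at each of the $r(n)$ type-II gates by a string $\vec b\in\{0,1\}^{r(n)}$, one obtains $U = \sum_{\vec b} U_{\vec b}$ with $2^{r(n)}$ summands, where inside each $U_{\vec b}$ every factor (whether originating from a type-I gate or from a Schmidt piece of a type-II gate) acts on exactly one of $\A_1,\A_2,\V$. The critical point is that operators acting on distinct subsystems commute, so the factors in $U_{\vec b}$ can be sorted by subsystem while preserving the original intra-subsystem order, yielding a genuine tensor factorization
\[
  U_{\vec b} = A_{\vec b}\ot B_{\vec b}\ot C_{\vec b},
\]
with $A_{\vec b},B_{\vec b},C_{\vec b}$ each a product of at most $t(n)$ norm-$1$ operators on $\A_1,\A_2,\V$. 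An identical expansion gives $U^* = \sum_{\vec b'} A^*_{\vec b'}\ot B^*_{\vec b'}\ot C^*_{\vec b'}$.

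Plugging both expansions into the formula for $Q^\text{acc}_x$ and using that $\Pi_\text{acc}$ factors across subsystems, the $\V$ part collapses into a scalar $\alpha_{\vec b,\vec b'}\defeq \bra{\vec 0}C^*_{\vec b'}P_\V C_{\vec b}\ket{\vec 0}$ with $|\alpha_{\vec b,\vec b'}|\le 1$, yielding
\[
  Q^\text{acc}_x = \sum_{\vec b,\vec b'}\alpha_{\vec b,\vec b'}\,\bigl(A^*_{\vec b'}A_{\vec b}\bigr)\ot\bigl(B^*_{\vec b'}B_{\vec b}\bigr).
\]
This is a sum of $2^{r(n)}\cdot 2^{r(n)} = 4^{r(n)}$ product terms, each with both tensor factors of operator norm at most $1$ (after absorbing $\alpha_{\vec b,\vec b'}$ into the left factor), which is exactly the claimed $(4^{r(n)},(1,1))$-decomposition. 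For the parallel runtime, each of the $4^{r(n)}$ pairs $(\vec b,\vec b')$ contributes independently, and within a single pair the computation is the formation of in-order products of $O(t(n))$ constant-support building blocks on each subsystem; this gives the $O(t(n)\cdot 4^{r(n)})\times\mathrm{poly}(n)$ workload and is trivially parallelizable across the pairs.

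The one step I expect to require real care is the tensor-factorization of $U_{\vec b}$: the reshuffle is justified only by the commutation of operators on disjoint subsystems, so one must be scrupulous that the order of factors \emph{within} a single subsystem is preserved while the order \emph{between} subsystems is permuted into a tensor product form. Once this is set up cleanly the norm bounds become automatic from the norm-$1$ Schmidt pieces of \textsc{cnot}, and the rest of the argument is bookkeeping on the indices $\vec b,\vec b'$.
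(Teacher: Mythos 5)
Your proof is correct and follows essentially the same route as the paper's: both rest on writing each cross-subsystem CNOT as the rank-$2$ sum $\ketbra{0}{0}\ot\I+\ketbra{1}{1}\ot X$ of norm-$1$ product operators, so that conjugating the acceptance projector by $U$ turns the $2^{r(n)}$ branches into $4^{r(n)}$ product terms whose tensor factors are products of norm-$1$ operators. The paper merely organizes the identical expansion iteratively, conjugating the projector gate by gate via $P_\tau=U_\tau^*P_{\tau+1}U_\tau$ and tracking a factor of $4$ per type-II gate, whereas you expand $U$ once and pair the branches of $U$ and $U^*$; the resulting decomposition, norm bounds, and parallel cost analysis coincide.
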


\begin{proof} For any $n \in \mathbb{N}$ and input $x$, let us denote the whole
unitary that the verification circuit applies on the initial state by
$U=U_tU_{t-1}\cdots U_1$ where each $U_i$ corresponds to one elementary
gate and $t=p+r$. Without loss of generality, we assume the output
bit is the first qubit in the space $\V$ and the verification
accepts when that qubit is 1. Let $\bar{\V}$ be the space $\V$
without the first qubit, then we have
\[
  Q^\text{acc}_x = \tr_{\V} \( \I_{\A_1\A_2} \ot \ketbra{\vec{0}}{\vec{0}} (U^* \I_{\A_1\A_2} \ot \I_{\bar{\V}} \ot \ketbra{1}{1} U) \I_{\A_1\A_2} \ot
  \ketbra{\vec{0}}{\vec{0}}\).
\]
Let $P_{t+1}=\I_{\A_1\A_2} \ot \I_{\bar{\V}} \ot \ketbra{1}{1}$ and
$P_\tau=U^*_\tau P_{\tau+1} U_\tau$ for $\tau$=t,t-1,...,1. It is
easy to see $P_1=U^* (\I_{\A_1\A_2} \ot \I_{\bar{\V}} \ot
\ketbra{1}{1}) U$. Also it is straightforward to verify that
$P_{t+1}$ is 1-decomposable. Now let us observe how the
decomposability of $P_\tau$ changes with $\tau$.

For each $\tau$,  the unitary $U_\tau$ either corresponds to a type-I or
type-II elementary gate. In the former case, applying $U_\tau$ won't
change the decomposability. Thus, $P_\tau$ is $M$-decomposable
if $P_{\tau+1}$ is. In the latter case, applying $U_\tau$ will
potentially change the decomposability in the following sense. For
any such CNOT gate one has $U_\tau= \ketbra{0}{0}\ot \I +
\ketbra{1}{1} \ot X$ where $X$ is the Pauli matrix for the flip. And
one can show
\begin{eqnarray*}
  P_\tau  &= &(\ketbra{0}{0}\ot \I) P_{\tau+1} (\ketbra{0}{0}\ot \I)+(\ketbra{0}{0}\ot \I) P_{\tau+1} (\ketbra{1}{1}\ot
  X) \\
  & +& (\ketbra{1}{1}\ot X) P_{\tau+1} (\ketbra{0}{0}\ot \I)+(\ketbra{1}{1}\ot X) P_{\tau+1} (\ketbra{1}{1}\ot
  X).
\end{eqnarray*}
Thus in general we can only say $P_\tau$ is $4M$-decomposable
if $P_{\tau+1}$ is $M$-decomposable. As there are $r(n)$ type-II
gates, one immediately has $P_1$ is $4^{r(n)}$-decomposable.
Moreover, each operator appearing in the decomposition is a
multiplication of unitaries , $\ketbra{0}{0}, \ketbra{1}{1}$ and $X$
in some order, which implies the operator norm of those operators is
bounded by 1.  Therefore we have $P_1$ is ($4^{r(n)},
(1,1)$)-decomposable.

Finally, it is not hard to verify that multiplications with
$\I_{\A_1\A_2}\ot \ketbra{\vec{0}}{\vec{0}}$ and partial trace
over $\V$ won't change the decomposability of $P_1$. Namely, we have
$Q^\text{acc}_x$ is ($4^{r(n)},(1,1)$)-decomposable. The above proof
can also be considered as the process to compute the decomposition
of $Q^\text{acc}_x$. Each multiplication of matrices can be done in
\class{NC(poly(n))}. And the total number of multiplications is
upper bounded by $O(t(n) 4^{r(n)})$. Therefore, the total parallel
running time is upper bounded by $O(t(n) 4^{r(n)})\times$ poly($n$).
\end{proof}

\begin{corollary}
$\class{QMA(2)}[\text{poly}(n), O(\log(n))]$ $\subseteq
\class{PSPACE}$.
\end{corollary}

\begin{proof}
This is a simple consequence of
Lemma~\ref{lm:local_ham_decomposable} and
Corollary~\ref{cor:nc_main}. For any fixed $x$ of length $n$. One
can first compute the decomposition of $Q^\text{acc}_x$ in parallel
with $O(t(n) 4^{r(n)})\times$ poly($n$) time, which is parallel
polynomial time in $n$ when $r(n)=O(\log(n))$ and $t(n) \in
\text{poly}(n)$. Hence the first step can be done in polynomial
space via the relation
\class{NC(poly)}=\class{PSPACE}~\cite{Borodin77}.

Then one can invoke the parallel algorithm in
Corollary~\ref{cor:nc_main} to approximate $\OptSep{Q^\text{acc}_x}$
to sufficient precision $\delta$ such that one can distinguish
between the two promises. Precisely in this case, we choose those
parameters as follows,
\[
 k=2, W=1, M=4^{O(\log(n))}=\text{poly}(n), 1/\delta=\text{poly}(n).
\]
Thus the whole algorithm can be done in polynomial space, which
completes the proof.
\end{proof}

\noindent \textbf{Remarks}. Although the proof of the result is not
too technical, it establishes the first non-trivial upper bound
(\class{PSPACE} in this case) for variants of \class{QMA(2)} that
allow quantum operations acting on both proofs at the same time. In
contrast, previous results are all about variants with nonadaptive
or adaptive local measurements, like
\class{BellQMA(2)}~\cite{Bra08,ABD+09,CD10} or
\class{LOCCQMA(2)}~\cite{ABD+09,BCY11}.

However, our results are hard to extend to the most general case of
\class{QMA(2)}. This is because SWAP-test operation uses
many more type-II gates than what is allowed in our method. And SWAP-test
seems to be inevitable if one wants to fully characterize the power
of \class{QMA(2)}.

\section{Quasi-polynomial algorithms for local Hamiltonian cases} \label{sec:local_hamiltonian}
In this section, we illustrate that if $Q$ appears in the objective
function that is a local Hamiltonian then the optimal value $\OptSep{Q}$
can be efficiently computed by our main algorithm. Consider any
$k$-partite space $\A_1 \ot \A_2 \ot \cdots \ot \A_k$ where each
partite $\A_i$ contains $n$ qubits and thus is of dimension $2^n$.

\begin{definition} \label{def:local_hamiltonian}
Any Hermitian $Q$ over $\A_1 \ot \cdots \ot \A_k$ is a $l$-local
Hamiltonian if $Q$ is expressible as $Q=\sum_{i=1}^r H_i$ where each
term is a Hermitian operator acting on at most $l$ qubits among
$k$ parties.
\end{definition}

Hamiltonians are widely studied in physics since they usually
characterize the energy of a physical system. Local Hamiltonians
are of particular interest since they refer to the energy of many
interesting models in low-dimension systems. Our algorithm can be
considered as a way to find the minimum energy in the system
achieved by separable states.

Local Hamiltonians are also appealing to computational complexity
theorists since the discovery of the promise 5-local Hamiltonian
problem~\cite{KitaevW02} which turns out to be \class{QMA}-complete.
Precisely, it refers to the following promise problem when $k=1,l=5$.

\begin{problem}[$k$-partite $l$-local Hamiltonian problem] \label{prob:promise_local_Ham}
Take the expression $Q=\sum_{i=1}^r H_i$ for any $l$-local
Hamiltonian over $\A_1 \ot \cdots \ot \A_k$ as input\footnote{It is
noteworthy to mention that the input size of local Hamiltonian
problems can be only poly-logarithm in the dimension of the space
where $Q$ sits in. }, where
$\opnorm{H_i}\leq 1$ for each $i$. Let $\OptSep{Q}$ denote the
minimum value of $\ip{Q}{\rho}$ achieved for some $\rho \in
\SepD{\A_1 \ot \cdots \ot \A_k}$. The goal is to tell between the
following two promises: either $\OptSep{Q} \geq a$ or $\OptSep{Q}
\leq b$ for some $a>b$ with inverse polynomial gap.
\end{problem}

When $k=1$, the promise problem defined above is exactly the
original $l$-local Hamiltonian problem. Subsequent results
demonstrate that it remains \class{QMA}-complete even
when $l=3,2$~\cite{AharonovGIK09,KempeKR06,OliveriaT08}. Our
definition of the promise problem naturally extends to the
$k$-partite case. We refer to Chapter 14 in~\cite{KitaevW02}
for technical details. It is not hard to see that $k$-partite
$l$-local Hamiltonian problems belong to \class{QMA(k)} by applying
similar techniques in the original proof. However, they do not remain as \class{QMA(k)}-complete problems.
This is because the original reduction transforms from the proof space to the transcript and clock
space and the separability of quantum states does not persevere
under such an operation. As a result,  $k$-partite local Hamiltonian
problems defined above only enforce the separability in the
transcript and clock space rather than in the proof space. Note this
is not an issue for the 1-partite case since there is no
separability involved. Nevertheless it becomes a huge problem for
its $k$-partite extensions.

\begin{lemma} \label{lm:local_ham_decomposable}
Any $l$-local Hamiltonian $Q$ over $\A_1 \ot \cdots \ot \A_k$ such
that $Q=\sum_{i=1}^r H_i$ and $\opnorm{H_i} \leq w$ is
($O((4nk)^l),w$ )-decomposable.
\end{lemma}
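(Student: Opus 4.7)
The plan is to decompose $Q$ in the Pauli basis, which is perfectly compatible with the $k$-partite tensor product structure. For each term $H_i$ in the given sum, with support $S_i$ of at most $l$ qubits out of the $nk$ total, I would write its Pauli expansion
\[
  H_i \;=\; \sum_{\vec{P}\in\{I,X,Y,Z\}^l} c_{i,\vec{P}}\, P^{\vec{P}}_{S_i}\otimes I_{\overline{S_i}},
\]
where $c_{i,\vec{P}} = \tr_{S_i}(H_i\, P^{\vec{P}})/2^{l}$. H\"older's inequality combined with $\opnorm{P^{\vec{P}}}=1$ and $\trnorm{P^{\vec{P}}}=2^{l}$ gives $|c_{i,\vec{P}}|\le\opnorm{H_i}\le w$, so each $H_i$ is written as a sum of at most $4^{l}$ Pauli terms whose scalar coefficients all have magnitude at most $w$.

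Next I would rewrite each $P^{\vec{P}}_{S_i}\otimes I_{\overline{S_i}}$ in the $k$-partite form by grouping single-qubit factors according to the party whose Hilbert space each qubit belongs to: $P^{\vec{P}}_{S_i}\otimes I_{\overline{S_i}} \;=\; P^{1}\otimes P^{2}\otimes\cdots\otimes P^{k}$, where each $P^{t}$ is a (possibly mostly-identity) Pauli on the $n$ qubits of party $t$ and has $\opnorm{P^{t}}=1$. Absorbing the scalar $c_{i,\vec{P}}$ into the first factor produces a single tensor product $(c_{i,\vec{P}}P^{1})\otimes P^{2}\otimes\cdots\otimes P^{k}$ whose factor operator norms are bounded by $w$ in the first slot and by $1\le w$ in the rest. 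Summing over $i$ and $\vec{P}$ and merging duplicates by their full Pauli pattern on $\A_{1}\otimes\cdots\otimes\A_{k}$, the number of distinct tensor products is bounded by the count of Paulis on $nk$ qubits of weight at most $l$:
\[
  \sum_{j=0}^{l}\binom{nk}{j}\,3^{j}\;\le\;(l+1)(3nk)^{l}\;=\;O\!\left((4nk)^{l}\right),
\]
which is the claimed bound on $M$.

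The main subtlety is that merging duplicates in the last step combines the scalar coefficients additively, so the absorbed factor's operator norm can in principle grow from $w$ up to $r\cdot w$. I expect to handle this either by simply skipping the merging step, which already yields the estimate $M=r\cdot 4^{l}$ that is $O((4nk)^{l})$ whenever $Q$ is presented in a canonical Pauli form (so that $r\le(4nk)^{l}$ automatically), or by observing that in the intended application to the $k$-partite local Hamiltonian problem $r$ is polynomial in the input size. In the latter case the absorbed factor remains polynomially bounded, so the product $W$ of factor widths is still polynomial, which is all Corollary~\ref{cor:nc_main} requires for the PSPACE-membership consequence.
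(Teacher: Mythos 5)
Your proof is correct and follows essentially the same route as the paper's: the paper likewise expands each $l$-local term into $4^{l}$ product operators of width at most $w$ (it simply asserts that an operator acting on $l$ qubits is $(4^{l},w)$-decomposable, which your Pauli expansion with the H\"older bound on the coefficients makes explicit) and multiplies by the number of terms $r\le\binom{kn}{l}$ to reach $O((4nk)^{l})$. The merging caveat you raise is present but unaddressed in the paper's one-line proof as well, which tacitly assumes at most one term per $l$-qubit subset.
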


\begin{proof}
Since $Q$ is a $l$-local Hamiltonian, it is easy to see $r \leq
{ kn \choose l}$. For each $H_i$ with $\opnorm{H_i} \leq w$, since
it acts only on at most $l$ qubits, it must be
($4^l,w$)-decomposable. Thus $Q$ is ($r4^l, w$)-decomposable.
In terms of only $n,k,l$, we have $Q$ is ($O((4nk)^l),w$)-decomposable.
\end{proof}

\begin{corollary} \label{cor:local_ham}
Take the expression $Q=\sum_{i=1}^r H_i$ of any $l$-local
Hamiltonian  over $\A_1 \ot \cdots \ot \A_k$ (each $\A_i$ is of
dimension $d=2^n$) such that $\opnorm{H_i} \leq w$ for each $i$ as input.
Assuming $k, l =O(1)$, the quantity $\OptSep{Q}$ can be approximated to
precision $\delta$ in quasi-polynomial time in $d, w, 1/\delta$.

If $n$ is considered as the input size and
$w/\delta=O(\text{poly(n)})$, then $\OptSep{Q}$ can be approximated
to precision $\delta$ in \class{PSPACE}.
\end{corollary}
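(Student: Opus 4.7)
The plan is to reduce both statements directly to the main algorithm of Theorem~\ref{thm:main_alg} (respectively its space-efficient version, Corollary~\ref{cor:nc_main}), using Lemma~\ref{lm:local_ham_decomposable} to supply the required decomposition of $Q$. First, I would write out the decomposition of $Q$ explicitly. Each $H_i$ acts on at most $l$ qubits and has operator norm at most $w$, and I expand it in the Pauli basis restricted to those $l$ qubits, then regroup the Pauli factors according to the partition $\A_1\ot\cdots\ot\A_k$, padding with identities on the parties not touched by $H_i$. This gives $H_i$ as a sum of at most $4^l$ tensor products of operators whose per-party operator norm is bounded by $w$ (the global bound $\opnorm{H_i}\le w$ propagates to each factor because identity padding is a norm-one tensor factor, and the Pauli coefficients are bounded appropriately); summing over $i\le\binom{nk}{l}=O((nk)^l)$ yields an $(M,\vec{w})$-decomposition with $M=O((4nk)^l)$ and $\vec{w}=(w,\dots,w)$, so $W=w^k$, matching Lemma~\ref{lm:local_ham_decomposable}.

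Second, I would plug this decomposition into Theorem~\ref{thm:main_alg}. With $k,l=O(1)$ and $n=\log_2 d$, we have $M=O((\log d)^l)=\op{poly-log}(d)$ and $W=w^k=\op{poly}(w)$. Substituting into the running time bound $O((\tfrac{(k-1)^2W^2M^2}{\delta^2})^{(k-1)M})\times\op{poly}(d,M,k,W,1/\delta)$, taking logarithms gives an exponent of order $(\log d)^l\cdot\log(w(\log d)/\delta)=O((\log(dw/\delta))^{l+1})$, i.e.\ quasi-polynomial in $d$, $w$, and $1/\delta$. This gives the first conclusion.

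For the PSPACE statement, I would verify the two preconditions of Corollary~\ref{cor:nc_main} under the new input size convention $d=2^n$. Since $k,l$ are constants and $M=O((4nk)^l)=\op{poly}(n)$, we have $kM\in O(\op{poly}(n))$; and since $W=w^k$ with $k$ constant and $w/\delta\in O(\op{poly}(n))$, we have $W/\delta\in O(\op{poly}(n))$. Applying Corollary~\ref{cor:nc_main} then immediately places the approximation of $\OptSep{Q}$ in \class{PSPACE}.

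The only nontrivial point is ensuring the decomposition itself is produced consistently with the resource bounds of each regime: in the quasi-polynomial setting this is clearly harmless, and in the \class{PSPACE} setting the decomposition is written down explicitly from the input and involves only $O((4nk)^l)=\op{poly}(n)$ terms of polynomial-size matrices, so it can be generated in polynomial space and fed to the space-efficient algorithm of Corollary~\ref{cor:nc_main}. The remainder of the argument is a direct composition, so I do not anticipate a real obstacle beyond bookkeeping of parameters.
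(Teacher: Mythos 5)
Your proposal is correct and follows essentially the same route as the paper: it invokes Lemma~\ref{lm:local_ham_decomposable} (whose Pauli-basis proof you simply spell out in more detail) to get an $(O((4nk)^l),w)$-decomposition with $M=\op{poly-log}(d)$ and $W=w^{O(1)}$, then feeds it to Theorem~\ref{thm:main_alg} for the quasi-polynomial bound and checks the preconditions $kM,\,W/\delta\in O(\op{poly}(n))$ of Corollary~\ref{cor:nc_main} for the \class{PSPACE} claim, including the observation that the decomposition itself is computable within the stated resources. No gaps relative to the paper's own argument.
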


\begin{proof}
The proof of the first part follows directly from
Lemma~\ref{lm:local_ham_decomposable} and
Theorem~\ref{thm:main_alg}. Recall the proof of
Lemma~\ref{lm:local_ham_decomposable} also provides a way to compute
the decomposition of $Q$ given the expression $Q=\sum_{i=1}^r H_i$
as input. It is easy to verify that $O(r4^l)$ time (upper bounded by
$O((4k\log d)^l)$) is sufficient to complete this computation.
After that, one may directly invoke the algorithm in
Fig.~\ref{fig:main_alg} and make use of Theorem~\ref{thm:main_alg}.
Now we substitute the following identities into our main algorithm. Note
$k,l=O(1)$ and we have
$M=O(\log^{O(1)} d), W=w^{O(1)}$.
One immediately gets the total running time bounded by
\[
 \exp( O( \log^{O(1)} (d) ( \log\log d + \log w/\delta )))
 \times \text{poly}(d, w, 1/\delta),
\]
which is quasi-polynomial time in $d, w, 1/\delta$.

For the second part when $n$ is considered as the input size, it is
easy to see the computation of the decomposition of $Q$
according to Lemma~\ref{lm:local_ham_decomposable} can be done in
\class{NC(poly)}, henceforth in polynomial space. (Note $M=O(\text{poly(n)})$.) Then by composing with the polynomial-space algorithm implied by
Corollary~\ref{cor:nc_main}, one proves the whole algorithm can be
implemented in polynomial space.
\end{proof}

\noindent \textbf{Remarks.} It is a direct consequence of
Corollary~\ref{cor:local_ham} that
Problem~\ref{prob:promise_local_Ham} is inside \class{PSPACE}.

\section{An algorithm with running time exponential in $\fbnorm{Q}$ } \label{sec:alg_Q2}
In this section  we demonstrate another application of the
simple idea "enumeration" by epsilon-net to
Problem~\ref{prob:opt_sep}. As a result, we obtained an algorithm
with running time exponential in $\fbnorm{Q}$ (or
$\loccnorm{Q}~\cite{MWW09}$\footnote{This follows easily from the
fact $\fbnorm{Q}=O(\loccnorm{Q})$~\cite{MWW09} where $\loccnorm{Q}$ stands for the LOCC norm of the operator Q. })
for computing $\OptSep{Q}$ with additive error $\delta$. A similar
running time $\exp(O(\log^2(d)\delta^{-2}\fbnorm{Q}^2 ))$ was
obtained in~\cite{BCY11} using some known results in quantum
information theory.(i.e., the semidefinite programming for finding
symmetric extension~\cite{DPS04} and an improved quantum de
Finetti-type bound.)

By contrast, our algorithm makes no use of any advanced tool above
and only utilizes fundamental operations of matrices. Intuitively, in order
to approximate the optimum value to precision $\delta$, one only
needs to look at the eigenspace of eigenvalues
greater than $\delta$, the dimension of which is no more than
$\fbnorm{Q}^2/\delta^2$.
Nevertheless, naively enumerating density operators over that
subspace doesn't work since one cannot detect the separability of
those density operators. We circumvent this difficulty by making
nontrivial use of the Schmidt decomposition of bipartite pure states.

Finally, as mentioned in the introduction we admit that other results in the original
paper~\cite{BCY11} do not follow from our algorithm and our method
cannot be seen as a replacement of the kernel technique of that
paper. Also our method does not extend to the $k$-partite version as there is no Schmidt decomposition in that case.

\begin{figure}[t]
\noindent\hrulefill
\begin{mylist}{8mm}
\item[1.]
Compute the spectral decomposition of $Q$. After that, one has the
decomposition $Q=\sum_{t} \lambda_t \ketbra{\Psi_t}{\Psi_t}$. Choose
$\epsilon=\delta/2$ and $\Gamma_\epsilon=\{t: \lambda_t \geq \epsilon\}$.
 Also let $\text{OPT}$ store the optimum value of the maximization
problem.
\item[2.]
Generate the $\varepsilon$-net of the unit ball of
$\complex^{|\Gamma_\epsilon|}$ under the Euclidean norm with
$\varepsilon= \frac{\delta}{4\fbnorm{Q}}$. Denote such set by
$\N_\varepsilon$. Then for each point $\alpha \in \N_\varepsilon$,
\begin{mylist}{8mm}
\item[(a)] Compute $\ket{\phi_\alpha}=\sum_{ t \in \Gamma_\epsilon} \alpha^*_t
\sqrt{\lambda_t} \ket{\Psi_t}$ and compute the Schmidt decomposition
of $\ket{\phi_\alpha}$, i.e.
\[
  \ket{\psi_\alpha} =\sum_i \mu_i \ket{u_i} \ket{v_i},
\]
where $\mu_1 \geq \mu_2 \geq \cdots$ and $\{u_i\}, \{v_i\}$ are
orthogonal bases. Note $\ket{\phi_\alpha}$ is not necessarily a unit
vector.
\item[(b)] Update OPT as follows: OPT=$\max\{$OPT,$\mu_1\}$.
\end{mylist}
\item[3.]
Return $\text{OPT}$.
\end{mylist}
\noindent\hrulefill \caption{The algorithm runs in time exponential
in $\fbnorm{Q}/\delta$. } \label{fig:LOCC_alg}
\end{figure}

Recall the optimization problem we are interested in is equivalent
to the following one.
\[
  \max: \ip{Q}{\rho} \text{ s.t. }
  \rho=\ketbra{u}{u}\ot\ketbra{v}{v}, \ket{u} \in \A_1, \ket{v} \in
  \A_2.
\]

\begin{theorem} \label{thm:LOCC_alg}
Given any positive semidefinite $Q$ over $\A_1 \ot \A_2$ (of
dimension $d \times d$) and $\delta>0$, the algorithm in
Fig.~\ref{fig:LOCC_alg} approximates the optimal value $\OptSep{Q}$
with additive error $\delta$ with running time
$\exp(O(\log(d)+\delta^{-2} \fbnorm{Q}^2 \ln (\fbnorm{Q}/\delta)))$.
\end{theorem}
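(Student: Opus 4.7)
The plan is to (i) truncate $Q$ to the span of its eigenvectors with eigenvalue at least $\epsilon=\delta/2$, (ii) recast the resulting optimization over product states as a single maximum over a low-dimensional parameter $\alpha$ supported on the top eigenspace, and (iii) discretize the $\alpha$-optimization by an $\varepsilon$-net on the unit ball of $\complex^{|\Gamma_\epsilon|}$. For the truncation, let $Q'=\sum_{t\in\Gamma_\epsilon}\lambda_t\ketbra{\Psi_t}{\Psi_t}$; since $Q\succeq 0$, one has $Q-Q'\succeq 0$ and $\opnorm{Q-Q'}<\epsilon$, hence $|\OptSep{Q}-\OptSep{Q'}|\le\epsilon=\delta/2$. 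Moreover $\fbnorm{Q}^2=\sum_t\lambda_t^2\ge|\Gamma_\epsilon|\epsilon^2$ gives the crucial dimension bound $|\Gamma_\epsilon|\le 4\fbnorm{Q}^2/\delta^2$, which controls the ambient dimension of the $\varepsilon$-net later.

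The heart of the argument is a duality/Schmidt reformulation. Writing $Q'=B^*B$ with $B=\sum_{t\in\Gamma_\epsilon}\sqrt{\lambda_t}\ket{t}\bra{\Psi_t}$ and using Euclidean-norm duality $\snorm{B\ket{uv}}=\max_{\snorm{\alpha}\le 1}|\alpha^*B\ket{uv}|$, every product state satisfies
\[
 \bra{uv}Q'\ket{uv}=\snorm{B\ket{uv}}^2=\max_{\snorm{\alpha}\le 1}|\braket{\phi_\alpha}{uv}|^2,
\]
where $\ket{\phi_\alpha}$ is exactly the vector defined in Step~2 of the algorithm. For fixed $\alpha$, the standard fact that $\max_{\snorm{u}=\snorm{v}=1}|\braket{\phi}{uv}|$ equals the largest Schmidt coefficient $\mu_1$ of $\ket{\phi}$ (a direct Cauchy--Schwarz bound, tight on the leading Schmidt pair) yields
\[
 \OptSep{Q'}=\max_{\snorm{\alpha}\le 1}\max_{u,v}|\braket{\phi_\alpha}{uv}|^2=\max_{\snorm{\alpha}\le 1}\mu_1(\alpha)^2,
\]
so the algorithm's per-point task reduces to a single Schmidt decomposition. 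This is the main conceptual obstacle: the naive alternative of enumerating density operators supported on the top eigenspace fails because separability cannot be detected from such a representation, and the Schmidt--duality rewrite is precisely what bypasses the difficulty (it is also the reason the method is restricted to the bipartite case).

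Finally, for the discretization, the map $\alpha\mapsto\ket{\phi_\alpha}$ is linear with $\snorm{\phi_\alpha-\phi_{\alpha'}}^2=\sum_{t\in\Gamma_\epsilon}\lambda_t|\alpha_t-\alpha'_t|^2\le\opnorm{Q}\snorm{\alpha-\alpha'}^2$, and since $\mu_1$ is the operator norm of the matricization, it is $1$-Lipschitz in the Euclidean norm of the bipartite vector. Combining with $\mu_1(\alpha)\le\snorm{\phi_\alpha}\le\sqrt{\opnorm{Q}}\le\sqrt{\fbnorm{Q}}$, one gets $|\mu_1(\alpha)^2-\mu_1(\alpha')^2|\le 2\fbnorm{Q}\snorm{\alpha-\alpha'}$, so the choice $\varepsilon=\delta/(4\fbnorm{Q})$ contributes at most $\delta/2$ additional error, for a total error of $\delta$. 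A standard volume argument bounds the $\varepsilon$-net size on the unit ball of $\complex^{|\Gamma_\epsilon|}$ by $(O(1/\varepsilon))^{2|\Gamma_\epsilon|}=\exp(O(\delta^{-2}\fbnorm{Q}^2\log(\fbnorm{Q}/\delta)))$, and each iteration performs an SVD in $\mathrm{poly}(d)$ time, giving the claimed overall bound $\exp(O(\log d+\delta^{-2}\fbnorm{Q}^2\ln(\fbnorm{Q}/\delta)))$.
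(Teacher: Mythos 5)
Your proposal is correct and follows essentially the same route as the paper's proof: truncation to the eigenspace of eigenvalues at least $\delta/2$, the Euclidean-norm duality rewrite of $\ip{\tilde{Q}}{\ketbra{u}{u}\ot\ketbra{v}{v}}$ as a maximum over $\alpha$ in the unit ball, the largest-Schmidt-coefficient fact, and an $\varepsilon$-net with $\varepsilon=\delta/(4\fbnorm{Q})$ contributing error at most $\delta/2$. Your phrasing of the truncation step via $Q-Q'\succeq 0$ and $\opnorm{Q-Q'}<\epsilon$, and of the net error via a Lipschitz bound on $\mu_1(\alpha)^2$, are only cosmetic repackagings of the paper's term-(II) bound and hybrid estimate, so no substantive difference remains.
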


\begin{proof}
We first prove the correctness of the algorithm. The analysis will
mainly be divided into two parts. Let $S_\epsilon=\text{span}\{
\ket{\Psi_t} | t \in \Gamma_\epsilon\}$. The first part shows it
suffices to only consider vectors inside the subspace $S_\epsilon$
for approximating $\OptSep{Q}$ with additive error $\delta$. The
second one demonstrates that our algorithm in
Fig.~\ref{fig:LOCC_alg} approximates the optimal value obtained by
only considering vectors in $S_\epsilon$. Precisely, since $\{
\ket{\Psi_i}\}$ forms a basis, one has  $\ket{u}\ket{v}=\sum_{t \in
\Gamma_\epsilon} \beta_t \ket{\Psi_t} + \sum_{t \notin
\Gamma_\epsilon} \beta_t \ket{\Psi_t}$ where $\beta$ is a unit
vector in $\complex^{d^2}$. Then we have
\[
  \ip{Q}{\ketbra{u}{u}\ot \ketbra{v}{v}}= \underbrace{\sum_{t \in \Gamma_\epsilon}
  \lambda_t |\beta_t|^2}_{(I)} + \underbrace{\sum_{t \notin \Gamma_\epsilon} \lambda_t
  |\beta_t|^2}_{(II)},
\]
where the term (II) is obviously bounded by $\delta/2$ (i.e.,
$\sum_{t \notin \Gamma_\epsilon} \lambda_t |\beta_t|^2 \leq
\delta/2$). For the term (I), it is equivalent to $\OptSep{\tilde{Q}}$ where $\tilde{Q}=\sum_{t \in
\Gamma_\epsilon}
  \lambda_t \ketbra{\Psi_t}{\Psi_t}$. Namely, small eigenvalues are
truncated in $\tilde{Q}$. Now observe the following identity.
\begin{eqnarray*}
  \max_{\ket{u}\ket{v}} \ip{\tilde{Q}}{\ketbra{u}{u} \ot
  \ketbra{v}{v}} & = & \max_{\ket{u}\ket{v}} \sum_{t \in \Gamma_\epsilon} \lambda_t
  |\bra{u}\braket{v}{\Psi_t}|^2 = \max_{\ket{u}\ket{v}} \snorm{\gamma^{u,v}}^2 \\
  & = & \max_{\ket{u}\ket{v}} \max_{\alpha \in
  \ball{|\Gamma_\epsilon|}{}} |\sum_{t \in \Gamma_\epsilon} \alpha^*_t
  \sqrt{\lambda_t}\bra{u}\braket{v}{\Psi_t}|^2 =\max_{\ket{u}\ket{v}} \max_{\alpha \in
  \ball{|\Gamma_\epsilon|}{}} |\bra{u}\braket{v}{\phi_\alpha}|^2 \\
  & = & \max_{\alpha \in
  \ball{|\Gamma_\epsilon|}{}}\max_{\ket{u}\ket{v}}
  |\bra{u}\braket{v}{\phi_\alpha}|^2,
\end{eqnarray*}
where $\gamma^{u,v} \in \complex^{|\Gamma_\epsilon|}$ and
$\gamma^{u,v}_t= \sqrt{\lambda_t} \bra{u}\braket{v}{\Psi_t}$ for
each $t \in \Gamma_\epsilon$. The second line comes from the duality
of the Euclidean norm (i.e., $\snorm{y}=\max_{ \snorm{z} \leq 1}
|\braket{z}{y}|$). The third line comes by exchanging positions of
the two maximizations. We then make use of the following well-known
fact. 

\begin{fact}[\cite{NielsenC00}]
For any bipartite vector $\ket{\psi}$ with the Schmidt decomposition
\[
  \ket{\psi} =\sum_i \mu_i \ket{u_i} \ket{v_i},
\]
where $\mu_1 \geq \mu_2 \geq \cdots$ and $\{u_i\}, \{v_i\}$ are
orthogonal bases . Then $\max_{\ket{u}\ket{v}}
|\bra{u}\braket{v}{\psi}|=\mu_1$ and the maximum value is obtained
by choosing $\ket{u}\ket{v}$ to be $\ket{u_1}\ket{v_1}$.
\end{fact}

It is not hard to see that our algorithm computes exactly the term
on the third line except that we replace the unit ball by its
$\epsilon$-net. However, this won't incur too much extra error. For any
$\alpha \in \ball{|\Gamma_\epsilon|}{}$, there exists
$\tilde{\alpha} \in \N_\epsilon$, such that
$\snorm{\alpha-\tilde{\alpha}} \leq \varepsilon$. Thus, the extra
error incurred is $||\bra{u}\braket{v}{\phi_\alpha}|^2-
|\bra{u}\braket{v}{\phi_{\tilde{\alpha}}}|^2|$ and can be bounded by
\begin{eqnarray*}
  (\snorm{\ket{\phi_\alpha}}+\snorm{\ket{\phi_{\tilde{\alpha}}}})|\bra{u}\braket{v}{\psi_\alpha-\psi_{\tilde{\alpha}}}|
  & \leq & 2 \max_{\snorm{\beta_1}\leq 1}
  \snorm{\phi_{\beta_1}} \max_{\beta_2=\alpha-\tilde{\alpha}, \snorm{\beta_2}\leq \varepsilon} \snorm{\phi_{\beta_2}} \\
  & = & 2\sqrt{\fbnorm{Q}}  \times \varepsilon  \sqrt{\fbnorm{Q}} \leq
  \delta/2,
\end{eqnarray*}
where $\max_{\snorm{\beta}\leq \epsilon'}
  \snorm{\phi_{\beta}} \leq  \epsilon' \sqrt{\fbnorm{Q}}$ for any $\epsilon'>0$ can be verified directly and therefore the total additive error is bounded by
$\delta/2+\delta/2=\delta$.

Finally, let us turn to the analysis of the efficiency of this
algorithm. The spectrum decomposition in the first step takes
polynomial time in $d$, so is the same with calculation of
$\ket{\psi_\alpha}$. The generation of the $\epsilon$-net of the unit ball
is standard and can be done in
$O((1+\frac{2}{\varepsilon})^{|\Gamma_\epsilon|})\times
\text{poly}(|\Gamma_\epsilon|)$. The last operation, finding the Schmidt
decomposition, is equivalent to singular value decompositions, and
thus can be done in polynomial time in $d$ as well. Also note
$|\Gamma_\epsilon| \leq \min \{d^2, \fbnorm{Q}^2/\delta^2\}$. To sum
up, the total running time of the algorithm is upper bounded by
$O((1+\frac{2}{\varepsilon})^{|\Gamma_\epsilon|})\times
\text{poly}(d)$, or equivalently $\exp(O(\log(d)+\delta^{-2}
\fbnorm{Q}^2 \ln (\fbnorm{Q}/\delta)))$.

\end{proof}

\noindent \textbf{Remarks.} One can also apply the observation in
the introduction to parallelize the computation in this case.
However, the size of the $\epsilon$-net here will depend on some
parameter (i.e. $\fbnorm{Q}/\delta$) other than the input.

\subsection*{Acknowledgement}
We thank Zhengfeng Ji and John Watrous for helpful discussions. This
research was supported in part by National Basic Research
Program of China Awards 2011CBA00300 and 2011CBA00301, and by
NSF of United States Award 1017335.

\bibliographystyle{alpha}

\appendix

\section{Proof of Lemma~\ref{lm:closeness_check}} \label{app:proof_lm_close}

\begin{theorem}[Multiplicative weights update method---see Ref.~{\cite[Theorem 10]{Kale07}}]
\label{thm:mwum}

Fix $\gamma\in(0,1/2)$. Let $N^{(1)},\dots,N^{(T)}$ be arbitrary
$d\times d$ ``loss'' matrices with $0\preceq N^{(t)}\preceq \alpha
I$. Let $W^{(1)},\dots,W^{(T)}$ be $d\times d$ ``weight'' matrices
given by
\begin{align*}
  W^{(1)} &= I &
  W^{(t+1)} &= \exp(-\gamma(N^{(1)} + \cdots + N^{(t)}) ).
\end{align*}
Let $\rho^{(1)},\dots,\rho^{(T)}$ be density operators obtained by
normalizing each $W^{(1)},\dots,W^{(T)}$ so that
$\rho^{(t)}=W^{(t)}/\tr{W^{(t)}}$. For all density operators $\rho$
it holds that
\[
  \frac{1}{T}\sum_{t=1}^T \ip{\rho^{(t)}}{N^{(t)}} \leq
  \ip{\rho}{\frac{1}{T}\sum_{t=1}^T N^{(t)}} + \alpha(\gamma + \frac{\ln d}{\gamma T}).
\]
\end{theorem}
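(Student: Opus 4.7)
The proof is the standard potential-function analysis of the matrix multiplicative weights update method. The plan is to track the scalar potential $\Phi^{(t)} \defeq \tr(W^{(t)})$ (with $\Phi^{(1)} = d$) and derive matching upper and lower bounds on $\ln\Phi^{(T+1)}$; combining them yields the regret bound.

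\textbf{Per-step upper bound.} First I would peel off the last term in the matrix exponential via the Golden--Thompson inequality $\tr(\exp(A+B)) \leq \tr(\exp(A)\exp(B))$ for Hermitian $A,B$, obtaining $\Phi^{(t+1)} \leq \tr(W^{(t)} \exp(-\gamma N^{(t)})) = \Phi^{(t)} \cdot \tr(\rho^{(t)} \exp(-\gamma N^{(t)}))$. The scalar inequality $e^{-y} \leq 1 - y + y^2/2$ for $y \geq 0$, lifted through the spectral theorem, gives $\exp(-\gamma N^{(t)}) \preceq I - \gamma N^{(t)} + (\gamma^2/2)(N^{(t)})^2$. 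Since $0 \preceq N^{(t)} \preceq \alpha I$ implies $(N^{(t)})^2 \preceq \alpha N^{(t)}$, taking the inner product with $\rho^{(t)}$ yields $\tr(\rho^{(t)}\exp(-\gamma N^{(t)})) \leq 1 - \gamma(1 - \gamma\alpha/2)\ip{\rho^{(t)}}{N^{(t)}}$. Telescoping over $t = 1, \ldots, T$ and using $\ln(1-x) \leq -x$ gives
\[
  \ln \Phi^{(T+1)} \leq \ln d - \gamma\Bigl(1 - \tfrac{\gamma\alpha}{2}\Bigr)\sum_{t=1}^T \ip{\rho^{(t)}}{N^{(t)}}.
\]

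\textbf{Lower bound.} For any density operator $\rho$, I would use $\tr(\exp(H)) \geq \exp(\lambda_{\max}(H))$ with $H = -\gamma\sum_{t=1}^T N^{(t)}$, together with $\lambda_{\max}(-\gamma\sum_t N^{(t)}) = -\gamma\,\lambda_{\min}(\sum_t N^{(t)}) \geq -\gamma\ip{\rho}{\sum_t N^{(t)}}$ (the last step because $\lambda_{\min}(A) \leq \ip{\rho}{A}$ for any density $\rho$). This yields $\ln\Phi^{(T+1)} \geq -\gamma\ip{\rho}{\sum_{t=1}^T N^{(t)}}$.

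\textbf{Combining.} Chaining the two bounds and dividing by $\gamma T$ produces
\[
  \Bigl(1 - \tfrac{\gamma\alpha}{2}\Bigr)\,\tfrac{1}{T}\sum_t \ip{\rho^{(t)}}{N^{(t)}} \leq \ip{\rho}{\tfrac{1}{T}\sum_t N^{(t)}} + \tfrac{\ln d}{\gamma T}.
\]
Moving the $\gamma\alpha/2$ factor across and using the crude bound $\ip{\rho^{(t)}}{N^{(t)}} \leq \alpha$ gives an additive overhead of order $\gamma\alpha^2$, which (up to the constant $1/2$) matches the $\alpha\gamma$ term in the claimed regret $\alpha(\gamma + \ln d/(\gamma T))$; the $\ln d/(\gamma T)$ term is already in place. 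The main obstacle is the non-commutativity of the loss matrices $N^{(t)}$: without Golden--Thompson one cannot validly factor $\tr(\exp(-\gamma\sum_t N^{(t)}))$ into a product over $t$, and this is precisely the place where the matrix setting becomes nontrivial compared with the scalar weights update. The remaining work is bookkeeping to choose the correct scalar surrogate of $e^{-\gamma x}$ on $[0,\alpha]$ so that the numerical constants line up with the stated form $\alpha(\gamma + \ln d/(\gamma T))$.
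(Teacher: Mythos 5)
The paper does not prove this theorem at all---it is imported by citation from Kale's thesis---so there is no internal proof to compare against; your Golden--Thompson/potential-function argument is precisely the standard derivation behind the cited result, and every individual step (the peeling $\tr\exp(A+B)\le\tr(e^Ae^B)$, the surrogate $e^{-y}\le 1-y+y^2/2$ lifted by functional calculus, $(N^{(t)})^2\preceq\alpha N^{(t)}$, the telescoping with $\ln(1-x)\le -x$, and the lower bound $\tr e^H\ge e^{\lambda_{\max}(H)}$ combined with $\lambda_{\min}(A)\le\ip{\rho}{A}$) is correct. The gap is entirely in your last sentence. What you actually derive is a regret of $\tfrac{\gamma\alpha^2}{2}+\tfrac{\ln d}{\gamma T}$, whereas the theorem claims $\alpha\gamma+\tfrac{\alpha\ln d}{\gamma T}$. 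These do \emph{not} agree up to a universal constant: the first terms differ by a factor $\alpha/2$ and the second terms by a factor $1/\alpha$, so your bound implies the stated one only when $\alpha$ is bounded between fixed constants. (You also appear to have read the stated second term as $\ln d/(\gamma T)$ rather than $\alpha\ln d/(\gamma T)$.)

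In fact no proof can close this gap, because the theorem as literally written is false: take $d=2$, $N^{(t)}=\alpha\ketbra{1}{1}$ for all $t$, $\rho=\ketbra{2}{2}$, and $\alpha\le 1/(100\gamma T)$. Then every $W^{(t)}$ is within $O(10^{-2})$ of $I$, so $\ip{\rho^{(t)}}{N^{(t)}}\approx\alpha/2$ for all $t$, while the right-hand side is $0+\alpha\bigl(\gamma+\tfrac{\ln 2}{\gamma T}\bigr)<\alpha/2$ once $T>\ln 2/(\gamma(1/2-\gamma))$. The statement is only scale-consistent if the update reads $W^{(t+1)}=\exp\bigl(-\tfrac{\gamma}{\alpha}(N^{(1)}+\cdots+N^{(t)})\bigr)$, which is the form in the sources this paper draws on; the $1/\alpha$ has been dropped here. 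With that normalization your own argument, run with learning rate $\gamma/\alpha$ (equivalently, applied to the unit-width losses $N^{(t)}/\alpha$), yields regret $\tfrac{\gamma\alpha}{2}+\tfrac{\alpha\ln d}{\gamma T}\le\alpha\bigl(\gamma+\tfrac{\ln d}{\gamma T}\bigr)$, exactly as claimed. So the right conclusion is not that a different scalar surrogate is needed, but that the exponent in the statement needs the $1/\alpha$; your proof then goes through verbatim. (This also matters downstream: the application in Lemma~\ref{lm:closeness_check} uses $\alpha=4Mw$, where the two forms of the bound differ by a factor of order $Mw$.)
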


Note that Theorem \ref{thm:mwum} holds for \emph{all} choices of
loss matrices $N^{(1)},\dots,N^{(T)}$, including those for which
each $N^{(t)}$ is chosen adversarially based upon
$W^{(1)},\dots,W^{(t)}$. This adaptive selection of loss matrices is
typical in implementations of the MMW. Consider the algorithm shown
in Fig.~\ref{fig:closeness}.

\begin{figure}[t]
\noindent\hrulefill
\begin{mylist}{8mm}
\item[1.]
Let $\gamma=\frac{\epsilon}{8Mw}$ and $T=\ceil{\frac{\ln
d}{\gamma^2}}$. Also let $W^{(1)}=\I_{\X}$, $d=\dim{(\X)}$.
\item[2.]
Repeat for each $t = 1,\ldots,T$:

\begin{mylist}{8mm}
\item[(a)]
Let $\rho^{(t)}=W^{(t)}/\tr{W^{(t)}}$ and compute
$\vec{q}(\rho^{(t)})$ (\citeequ{\ref{eqn:def_vec_q}}). One can then
rewrite the vector $\vec{p}-\vec{q}(\rho^{(t)})$ in the polar form
$(c^{(t)}_1e^{i\phi^{(t)}_1}, c^{(t)}_2e^{i\phi^{(t)}_2},\cdots,
c^{(t)}_M e^{i\phi^{(t)}_M})$ and choose
$\vec{z}^{(t)}=(e^{-i\phi^{(t)}_1}, e^{-i\phi^{(t)}_2}, \cdots,
e^{-i\phi^{(t)}_M})$. It is not hard to see such $\vec{z}^{(t)}$
maximizes $\Rep \ip{\vec{p}-\vec{q}(\rho^{(t)})}{\vec{z}}$.

\item[(b)]
Choose $N^{(t)}$ to be
\[
   N^{(t)}= \Rep \ip{\vec{p}}{\vec{z}^{(t)}} \I_\X-
   \frac{1}{2}(Q^{(t)}+Q^{(t)*})+2Mw\I_\X,
\]
where $Q^{(t)}=\sum_{i=1}^M e^{+i\phi^{(t)}_i}Q_i$.
\item[(c)]
Update the weight matrix as follows: $ W^{(t+1)}=\exp(-\gamma
\sum_{\tau=1}^t N^{(\tau)})$.
\end{mylist}

\item[3.]
Return $\tilde{d}=\frac{1}{T} \sum_{t=1}^T
\ip{\rho^{(t)}}{N^{(t)}-2Mw\I_\X}$.
\end{mylist}
\noindent\hrulefill \caption{An algorithm that approximates the
$d(\vec{p})$ with additive error $\epsilon$. } \label{fig:closeness}
\end{figure}

\begin{lemma}[Restated Lemma~\ref{lm:closeness_check}]
Given any point $\vec{p} \in \sraw{M,w}$ and $\epsilon>0$, the
algorithm in Fig.~\ref{fig:closeness} approximates
$\text{dis}(\vec{p})$ with additive error $\epsilon$. Namely, the
return value $\tilde{d}$ of this algorithm satisfies
\[
   \tilde{d}-\epsilon \leq \text{dis}(\vec{p}) \leq
   \tilde{d}+\epsilon.
\]
Moreover, the algorithm runs in $\text{poly}(d,M,w,1/\epsilon)$
time. Furthermore, if $d$ is considered as the input size and
$M,w,1/\epsilon \in O( \text{poly-log}(d))$, this algorithm is also
efficient in parallel, namely, inside \class{NC}.
\end{lemma}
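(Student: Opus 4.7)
The plan is to view the algorithm in Fig.~\ref{fig:closeness} as an instantiation of the matrix multiplicative weights update method of Theorem~\ref{thm:mwum}, where the $\rho$-player (the minimizer in the min-max expression \eqref{eqn:closeness_equilibrium}) runs MMW while the $\vec{z}$-player best-responds in closed form via the polar-decomposition trick in Step~2(a). Set $f(\rho,\vec z) := \Rep\ip{\vec p-\vec q(\rho)}{\vec z}$, so that $\text{dis}(\vec p) = \min_\rho \max_{\vec z\in\ball{M}{\infty}} f(\rho,\vec z)$.

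The first step is to check that the loss matrices satisfy $0\preceq N^{(t)}\preceq \alpha\I$ with $\alpha=4Mw$, so that Theorem~\ref{thm:mwum} applies. Both $\snorm{\Rep \ip{\vec p}{\vec z^{(t)}}}\leq Mw$ (by H\"older with $\onenorm{\vec p}\leq Mw$ and $\infnorm{\vec z^{(t)}}\leq 1$) and $\opnorm{\frac{1}{2}(Q^{(t)}+Q^{(t)*})}\leq Mw$ (triangle inequality over $\opnorm{Q_i}\leq w$), so the additive shift $2Mw\I$ makes $N^{(t)}$ PSD and bounded above by $4Mw\I$. A routine computation then shows the key identity
\[
\ip{\rho}{N^{(t)}} - 2Mw \;=\; f(\rho,\vec z^{(t)})\qquad\text{for every }\rho\in\density{\H},
\]
which is linear in $\rho$; in particular, the algorithm's return value is $\tilde d = \frac{1}{T}\sum_t f(\rho^{(t)},\vec z^{(t)})$.

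For the \textbf{lower bound} $\text{dis}(\vec p)\leq \tilde d$, observe that Step~2(a) chooses $\vec z^{(t)}$ so as to maximize $f(\rho^{(t)},\cdot)$ over $\ball{M}{\infty}$ (this is exactly the dual characterization of the $\onenorm{\cdot}$-norm), hence
\[
f(\rho^{(t)},\vec z^{(t)}) \;=\; \max_{\vec z\in\ball{M}{\infty}} f(\rho^{(t)},\vec z) \;\geq\; \min_\rho \max_{\vec z} f(\rho,\vec z) \;=\; \text{dis}(\vec p),
\]
and averaging over $t$ gives $\tilde d\geq \text{dis}(\vec p)$. For the \textbf{upper bound} $\tilde d\leq \text{dis}(\vec p)+\epsilon$, apply Theorem~\ref{thm:mwum} to the sequence $\{N^{(t)}\}$, which yields
\[
\tilde d + 2Mw \;\leq\; \min_{\rho}\Big\langle \rho,\tfrac{1}{T}\sum_t N^{(t)}\Big\rangle + \alpha\Big(\gamma+\tfrac{\ln d}{\gamma T}\Big).
\]
Using the linear identity above, the first term on the right equals $2Mw + \min_{\rho} f(\rho, \bar{\vec z})$ where $\bar{\vec z}=\tfrac{1}{T}\sum_t \vec z^{(t)}\in\ball{M}{\infty}$ by convexity of the unit ball. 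Then \eqref{eqn:closeness_equilibrium} (von Neumann/Fan min-max) gives $\min_\rho f(\rho,\bar{\vec z})\leq \max_{\vec z}\min_\rho f(\rho,\vec z) = \text{dis}(\vec p)$. Substituting $\gamma=\epsilon/(8Mw)$ and $T=\lceil\ln d/\gamma^2\rceil$ makes the error term $4Mw\cdot(\epsilon/(8Mw) + \gamma)\leq \epsilon$, completing the additive-error analysis.

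Finally, the running time and parallelism: there are $T=O(M^2w^2\ln d/\epsilon^2)$ rounds, and each round requires only inner products with $\vec p$, a matrix exponential of a Hermitian sum, and a trace-normalization. Standard eigendecomposition places all of these in \class{NC}, and when $M,w,1/\epsilon\in O(\text{poly-log}(d))$ the number of rounds is itself polylog in $d$, so the sequential composition of the rounds remains in \class{NC}. The main obstacles I expect are (i) getting the PSD bound on $N^{(t)}$ exactly right so that $\alpha=O(Mw)$ (rather than some larger constant that would hurt the error tradeoff), and (ii) justifying that the inherently sequential MMW loop can still be made parallel --- which rests on $T$ being polylog under the stated hypothesis, together with the fact that a polylog-length composition of \class{NC} circuits is again in \class{NC}.
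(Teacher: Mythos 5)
Your proposal is correct and follows essentially the same route as the paper's proof: verify $0\preceq N^{(t)}\preceq 4Mw\,\I$, invoke Theorem~\ref{thm:mwum} with $\alpha=4Mw$, use the identity $\ip{\rho}{N^{(t)}}-2Mw=\Rep\ip{\vec p-\vec q(\rho)}{\vec z^{(t)}}$, get the lower bound from the best-response property of $\vec z^{(t)}$ and the upper bound from the min-max equality \eqref{eqn:closeness_equilibrium}, and then count $T$ rounds of \class{NC} matrix operations. The only cosmetic difference is that in the upper bound you bound $\min_\rho f(\rho,\bar{\vec z})$ by $\max_{\vec z}\min_\rho f(\rho,\vec z)$ directly, whereas the paper substitutes an equilibrium point $\rho^\star$ --- these are equivalent given the min-max theorem.
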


\begin{proof}
The algorithm is a typical application of the matrix multiplicative
weight update method. In order to make use of
Theorem~\ref{thm:mwum}, we need first to show $N^{(t)}$ is bounded
for each $t$.  Since $\vec{p} \in \sraw{M,w}$ and
$\snorm{\vec{z}^{(t)}}_\infty \leq 1$, by Cauchy-Schwartz inequality
we have
\[
  |\Rep \ip{\vec{p}}{\vec{z}^{(t)}}| \leq \snorm{\vec{p}}_1
  \snorm{\vec{z}^{(t)}}_\infty \leq M \snorm{\vec{p}}_\infty
  \snorm{\vec{z}^{(t)}}_\infty= Mw.
\]
Furthermore we have
\[
  \snorm{Q}_\text{op} = \snorm{\sum_{i=1}^M e^{-i\phi^{(t)}_i}Q_i
  }_\text{op} \leq \sum_{i=1}^M \snorm{Q_i}_\text{op} \leq Mw.
\]
Thus by triangle inequality, one can easily find
\[
   0\preceq N^{(t)}\preceq 4Mw\I_\X.
\]
Then we can make use of Theorem~\ref{thm:mwum}. Immediately, for
any $\rho \in \density{\X}$, we have
\[
  \frac{1}{T}\sum_{t=1}^T \ip{\rho^{(t)}}{N^{(t)}} \leq
  \ip{\rho}{\frac{1}{T}\sum_{t=1}^T N^{(t)}} + \alpha(\gamma + \frac{\ln d}{\gamma T}).
\]
Substitute $\alpha=4Mw, \gamma=\frac{\epsilon}{8Mw} $ and
$T=\ceil{\frac{\ln d}{\gamma^2}} $. Also consider the identity
$\ip{\rho^{(t)}}{N^{(t)}-2Mw\I_\X}= \Rep
\ip{\vec{p}-\vec{q}(\rho^{(t)})}{\vec{z}^{(t)}}$. Then we have for
any $\rho \in \density{\X}$,
\begin{equation} \label{eqn:mmw_proof}
\tilde{d}=\frac{1}{T} \sum_{t=1}^T \ip{\rho^{(t)}}{N^{(t)}-2Mw\I_\X}
\leq \Rep
\ip{\vec{p}-\vec{q}(\rho)}{\frac{1}{T}\sum_{t=1}^T\vec{z}^{(t)}} +
\epsilon.
\end{equation}
Consider the equilibrium value form of $\text{dis}(\vec{p})$ in
\citeequ{\ref{eqn:closeness_equilibrium}}. For each $\rho^{(t)}$, we
always find the $\vec{z}^{(t)}$ that maximizes $\Rep
\ip{\vec{p}-\vec{q}(\rho^{(t)})}{\vec{z}}$. Hence,
$\text{dis}(\vec{p})\leq \tilde{d}$. Let $\rho^\star$ be any
equilibrium point of the equilibrium value in
\citeequ{\ref{eqn:closeness_equilibrium}}. By substituting such
$\rho^\star$ into \citeequ{\ref{eqn:mmw_proof}} we have
\[
 \tilde{d}\leq \Rep
\ip{\vec{p}-\vec{q}(\rho^\star)}{\frac{1}{T}\sum_{t=1}^T\vec{z}^{(t)}}
+ \epsilon \leq \text{dis}(\vec{p})+\epsilon.
\]
So far we complete the proof of the correctness of this algorithm.
Note that each step in the algorithm only contains fundamental
operations of matrices and vectors, which can be done in polynomial
time in $M,d$. Also there are totally $O(T)=\text{poly}(\ln
d,M,w,1/\epsilon)$ steps, thus the whole algorithm can be executed
in $\text{poly}(d,M,w,1/\epsilon)$ time.  Moreover, given the fact
that fundamental operations of matrices and vectors also admit
efficient algorithms in parallel (i.e., \class{NC} algorithm), one
can easily compose these \class{NC} circuits of each step and
obtain a \class{NC} algorithm as a whole if the total number of
steps $T$ is not too large. Precisely, if $M,w,1/\epsilon \in O(
\text{poly-log}(d))$ and  $d$ is considered as the input size, this
algorithm is also efficient in parallel.
\end{proof}

\end{document}